\documentclass[12pt]{amsart}

\usepackage[centertags]{amsmath}
\usepackage{amsthm,amsfonts}
\usepackage{tikz, float}

\usepackage{bm}

\usepackage{amssymb}
\usepackage{epsfig}
\usepackage{amssymb, latexsym}
\usepackage{indentfirst}

\usepackage{color}
\usepackage{ulem}
\usepackage{verbatim}



\setlength{\textheight}{23cm} \setlength{\textwidth}{15.5cm}
\setlength{\oddsidemargin}{.2cm} \setlength{\evensidemargin}{.2cm}
\setlength{\topmargin}{0cm}

\begin{document}
\theoremstyle{plain}
\newtheorem{theorem}{Theorem}[section]
\newtheorem{lemma}{Lemma}[section]
\newtheorem{proposition}{Proposition}[section]
\newtheorem{corollary}{Corollary}[section]
\newtheorem{example}{Example}[section]
\theoremstyle{definition}
\newtheorem{notations}[theorem]{Notations}
\newtheorem{notation}[theorem]{Notation}
\newtheorem{remark}[theorem]{Remark}
\newtheorem{question}[theorem]{Question}
\newtheorem{definition}[theorem]{Definition}
\newtheorem{condition}[theorem]{Condition}

\newcommand{\Hom}{{\rm Hom}}
\newcommand{\Ima}{{\rm Im}}
\newcommand{\Ker}{{\rm Ker \ }}
\newcommand{\rank}{{\rm rank}}
\newcommand{\Norm}{{\rm Norm}}
\newcommand{\rowspace}{{\rm rowspace}}
\newcommand{\Span}{{\rm Span}}
\newcommand{\Tr}{{\rm Tr}}
\newcommand{\Char}{{\rm char \ }}

\newcommand{\soplus}[1]{\stackrel{#1}{\oplus}}
\newcommand{\dlog}{{\rm dlog}\,}    

\newenvironment{pf}{\noindent\textbf{Proof.}\quad}{\hfill{$\Box$}}

\newcommand{\sA}{{\mathcal A}}
\newcommand{\sC}{{\mathcal C}}
\newcommand{\sD}{{\mathcal D}}
\newcommand{\sG}{{\mathcal G}}
\newcommand{\sH}{{\mathcal H}}
\newcommand{\sL}{{\mathcal L}}
\newcommand{\sP}{{\mathcal P}}

\newcommand{\F}{{\mathbb F}}
\newcommand{\Z}{{\mathbb Z}}
\newcommand{\R}{{\mathbb R}}
\newcommand{\N}{{\mathbb N}}
\newcommand{\A}{{\mathbb A}}
\newcommand{\K}{{\mathbb K}}

\newcommand{\be}{\begin{eqnarray}}
\newcommand{\ee}{\end{eqnarray}}
\newcommand{\nn}{{\nonumber}}
\newcommand{\dd}{\displaystyle}
\newcommand{\ra}{\rightarrow}

\newcommand{\mi}{{\rm max-isometry}}

\newcommand{\SqBinom}[2]{\genfrac{[}{]}{0pt}{}{#1}{#2}}
\newcommand{\Binom}[2]{\genfrac{(}{)}{0pt}{}{#1}{#2}}

\newcommand{\Keywords}[1]{\par\noindent
{\small{\it Keywords\/}: #1}}

\title[LCD codes tridiagonal Toeplitz matrices]{LCD  Codes from tridiagonal Toeplitz matrices}

\author[Minjia Shi   \& Ferruh \"{O}zbudak \& Li Xu \& Patrick Sol\'e]{Minjia Shi \& Ferruh \"{O}zbudak \& Li Xu\&  Patrick Sol\'e}

\thanks{Minjia Shi is with Key Laboratory of Intelligent Computing Signal Processing, Ministry of Education,
School of Mathematical Sciences, Anhui University, Hefei, Anhui, 230601, China, e-mail: smjwcl.good@163.com}

\thanks{Ferruh \"{O}zbudak is with Department of Mathematics and Institute of Applied Mathematics, Middle East Technical
        University,   Ankara, Turkey;
        e-mail: ozbudak@metu.edu.tr}
        
\thanks{Li Xu is with School of Mathematical Sciences, Anhui University, Hefei, Anhui, 230601, China,  e-mail: xuli1451@163.com}

\thanks{Patrick Sol\'e is with I2M, Aix Marseille Univ., Centrale Marseille, CNRS, Marseille, France , e-mail: sole@enst.fr}

\abstract
Double Toeplitz (DT) codes are codes with a generator matrix of the form $(I,T)$ with $T$ a Toeplitz matrix, that is to say constant on the diagonals parallel to the main.
When $T$ is tridiagonal and symmetric we determine its spectrum explicitly by using Dickson polynomials, and deduce from there conditions for the code to be LCD.
Using a special concatenation process, we construct optimal or quasi-optimal examples of binary and ternary LCD codes from DT codes over extension fields.
\vspace{0.7cm}

\Keywords{LCD codes, Toeplitz matrices, Dickson polynomials}

{\em AMS(2020) Math Sc. Cl.} 94B05, 15B05, 12E10

\endabstract

\maketitle

\section{Introduction}

Linear Complementary Dual (LCD) codes are linear codes which intersect their dual trivially. They were introduced by Massey in 1992 to solve a problem in Information Theory \cite{Massey}.
They were proved to be asymptotically good by Sendrier \cite{S1}, who used them in relation with equivalence testing of linear codes \cite{S2}. They enjoyed a renewal of interest in 2016, with
an application to side-channel attacks on embarked cryptosystems \cite{CG}. Recently LCD double circulant codes or double negacirculant codes were constructed over various alphabets \cite{HSS,SHSS1,SHSS2,1,2}. A far reaching generalization
of both double circulant and double negacirculant codes is that of double Toeplitz codes \cite{SXS}. In the present paper, we introduce a class of double Toeplitz codes which can be effectively tested for being LCD.

A code is double Toeplitz (DT) if its generator matrix is of the form $(I,T)$ with $I$ an identity matrix, and $T$ a Toeplitz matrix of the same order. Recall that a matrix is Toeplitz if
it has constant entries on all diagonals parallel to the main diagonal. Thus circulant matrices and negacirculant matrices are Toeplitz.

It is easy to check that such a code is LCD iff $-1$ is not an eigenvalue of $TT^t.$ To make that condition easy to check we will
make two hypotheses on $T:$
\begin{itemize}
 \item $T=T^t$ implying $TT^t=T^2;$
 \item $T$ is {\em tridiagonal}, in  the sense that $T_{ij}=0$ if $|i-j|>1.$
\end{itemize}

In the next section, we show that the characteristic polynomial of a tridiagonal symmetric Toeplitz matrix satisfies a three-term recurrence that can be identified, up to an easy
change of variable to that of the Dickson polynomials \cite{DicksonBook}. The roots of these polynomials can be determined explicitly \cite{BZ}. Hence we obtain an exact and explicit characterization on whether a given DT code $(I,T)$ is LCD or not, when $T$ is tridiagonal and symmetric (see Theorems \ref{theorem.LCD.char.even.extension} and \ref{theorem.LCD.char.odd.extension} below).
It seems very difficult to obtain such a characterization for arbitrary Toeplitz $T$.  Moreover this is the first paper in the literature using factorization of Dickson polynomials for the characterization of some LCD codes as far as we know.

Under some mild arithmetic conditions
we can show that this spectrum does not intersect  the base field, and in particular does not contain $-1.$ Some sufficient conditions for the DT code to be LCD follow.
Since the DT codes so constructed have minimum distance at most three, a rather sophisticated concatenation process, namely isometry (see Definition \ref{isometry} below) can be used to construct an LCD code over a small field. Note that because of the fundamental result
that any linear code
over $\F_q$ with $q>3$ is equivalent to an LCD code \cite{CMTQP}, the theory of LCD codes is focusing on the cases of $\F_2$ and $\F_3.$ Using the said concatenation process optimal
or quasi-optimal LCD codes over these two fields are explicitly constructed.

The material is organized as follows. The next section studies the spectrum of Toeplitz matrices. Section 3 describes a concatenation process that allow for LCD codes over small fields.
Numerical examples are given there. The last section concludes the paper.
\section{Toeplitz matrices}
\subsection{A Spectral lemma}
Throughout this paper, let $p$ be a prime, $q=p^s$ for a positive integer $s$.
Let $\F_q$ denote the finite field of $q$ elements. Let $\overline{\F}_q$ denote an agebraic closure of $\F_q$.

\begin{lemma} \label{lemma1}
For $n \ge 1$ let $A$ be an $n \times n$ matrix over $\F_q$. We have the following cases:
\begin{itemize}
\item  \underline{$\Char \F_q$ is even:}
$-1$ is an eigenvalue of $A^2$ if and only if $-1$ is an eigenvalue of $A$.
\item  \underline{$\Char \F_q$ is odd:}
$-1$ is an eigenvalue of $A^2$ if and only if $-\mu$ or $\mu$ is an eigenvalue of $A$,
where $\mu \in \F_{q^2}$ with $\mu^2=-1$.
\end{itemize}
\end{lemma}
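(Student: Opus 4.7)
The plan is to reduce everything to a single determinantal factorization and then peel off the two characteristic cases. For any $\mu$ in $\overline{\F}_q$ with $\mu^2=-1$, the matrix identity
\[
A^2+I \;=\; (A-\mu I)(A+\mu I)
\]
holds over $\overline{\F}_q$ by direct expansion, since $\mu I$ commutes with $A$. Passing to determinants and using multiplicativity yields $\det(A^2+I)=\det(A-\mu I)\det(A+\mu I)$. The left-hand side is the characteristic polynomial of $A^2$ evaluated at $-1$; its vanishing is independent of which field extension we compute it in, so ``$-1$ is an eigenvalue of $A^2$'' is just $\det(A^2+I)=0$.

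In the even-characteristic case, $-1=1$ and the polynomial $x^2+1=(x+1)^2$ has $\mu=1=-1$ as its unique (double) root in $\overline{\F}_q$. The identity above becomes $A^2+I=(A+I)^2$, so $\det(A^2+I)=\det(A+I)^2$, and the product vanishes precisely when $\det(A+I)=0$, i.e.\ when $-1$ is an eigenvalue of $A$. This gives the first bullet.

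In the odd-characteristic case I fix $\mu\in\F_{q^2}$ with $\mu^2=-1$; such a $\mu$ exists because $x^2+1$ splits in $\F_{q^2}$ (and in $\F_q$ itself precisely when $q\equiv 1\pmod 4$). Now $\mu$ and $-\mu$ are the two distinct square roots of $-1$, so the product $\det(A-\mu I)\det(A+\mu I)$ vanishes if and only if at least one factor does, i.e.\ if and only if $\mu$ or $-\mu$ is a root of the characteristic polynomial of $A$ in $\F_{q^2}$. This is exactly the second bullet.

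The main ``obstacle'' is really just a notational one: in the odd-characteristic statement the candidate eigenvalues $\pm\mu$ need not lie in $\F_q$, so I would agree once at the outset that, for a matrix $B$ with entries in $\F_q$ and a scalar $\lambda$ in some extension, ``$\lambda$ is an eigenvalue of $B$'' means $\det(B-\lambda I)=0$ computed in that extension. With this convention the determinantal factorization above carries the whole argument, and no further computation is needed.
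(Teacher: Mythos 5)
Your proof is correct and follows essentially the same route as the paper: the paper's entire argument is the factorization $(A+I_n)^2=A^2+I_n$ in even characteristic and $(A+\mu I_n)(A-\mu I_n)=A^2+I_n$ in odd characteristic, with the eigenvalue statements then read off from the vanishing of the corresponding determinants. Your version merely makes the determinantal step and the convention about eigenvalues in extension fields explicit, which the paper leaves implicit.
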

\begin{proof}
If $\Char \F_q$ is even, then
\be
(A+I_n)^2=A^2+I_n^2=A^2+I_n,
\nn\ee
which completes the proof in this case.

If $\Char \F_q$ is odd, then
\be
(A+\mu I_n)(A-\mu I_n)= A^2 -\mu^2 I_n = A^2 + I_n,
\nn\ee
which completes the proof.
\end{proof}
\subsection{ Characteristic polynomial}
For $a \in \F_q$ and $n \ge 3$, let $T_n(a)$ be the $n \times n$ {\it tridiagonal Toeplitz} matrix depending on $a$ defined as
\be
T_n(a)=\left[
\begin{array}{cccccc}
a & 1 & 0 & \cdots & & \\
1 & a & 1 & \cdots & & \\
\vdots & & & & & \\
0 & & & \cdots & 1 & a
\end{array}
\right].
\nn\ee
Namely, for example, we have
\be
T_3(a)=\left[
\begin{array}{ccc}
a & 1 & 0  \\
1 & a & 1  \\
0 & 1 & a
\end{array}
\right]
\;\mbox{and} \;
T_4(a)=\left[
\begin{array}{cccc}
a & 1 & 0 & 0 \\
1 & a & 1 & 0 \\
0 & 1 & a & 1 \\
0 & 0 & 1 & a
\end{array}
\right].
\nn\ee
We also define the cases for $n=1,2$ as
\be
T_1(a)=\left[
\begin{array}{c}
a
\end{array}
\right]
\; \mbox{and} \;
T_2(a)=\left[
\begin{array}{cc}
a & 1 \\
1 & a
\end{array}
\right].
\nn\ee

For $n \ge 1$ let
\be
\phi_n(\lambda)=\det\left( T_n(a)- \lambda I_n\right).
\nn\ee

\begin{lemma} \label{lemma2}
Under notation as above, we have
\be
\phi_n(\lambda)= (a-\lambda) \phi_{n-1}(\lambda) - \phi_{n-2}(\lambda)
\nn\ee
for $n \ge 2$ with $\phi_1(\lambda)=a-\lambda$ and $\phi_0(\lambda)=1$.
\end{lemma}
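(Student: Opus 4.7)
The plan is to prove the three-term recurrence by expanding the determinant $\phi_n(\lambda)=\det(T_n(a)-\lambda I_n)$ by cofactors along the last row. The last row of $T_n(a)-\lambda I_n$ has only two nonzero entries: a $1$ at position $(n,n-1)$ and $(a-\lambda)$ at position $(n,n)$. Therefore the Laplace expansion collapses to a sum of just two terms,
\begin{equation*}
\phi_n(\lambda) = (a-\lambda)(-1)^{2n}\det M_{n,n} + 1\cdot(-1)^{2n-1}\det M_{n,n-1},
\end{equation*}
where $M_{n,j}$ denotes the minor obtained by deleting row $n$ and column $j$.

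The first minor $M_{n,n}$ is obtained by removing the last row and last column of $T_n(a)-\lambda I_n$; by inspection this matrix is precisely $T_{n-1}(a)-\lambda I_{n-1}$, so $\det M_{n,n}=\phi_{n-1}(\lambda)$. For the second minor, I would argue as follows. Deleting row $n$ and column $n-1$ from $T_n(a)-\lambda I_n$ leaves an $(n-1)\times(n-1)$ matrix whose last column (inherited from the original column $n$) has a single nonzero entry, namely $1$ in the bottom slot. A second cofactor expansion along that column picks up the sign $(-1)^{2(n-1)}=1$ and reduces the computation to the determinant of the upper-left $(n-2)\times(n-2)$ block, which is exactly $T_{n-2}(a)-\lambda I_{n-2}$, giving $\phi_{n-2}(\lambda)$. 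Substituting these back yields the claimed recurrence.

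The initial values $\phi_1(\lambda)=a-\lambda$ follow directly from the $1\times 1$ definition of $T_1(a)$, while $\phi_0(\lambda)=1$ is the standard convention for the empty determinant (and is consistent with the recurrence applied to $n=2$, since $\phi_2(\lambda)=(a-\lambda)^2-1$ can be checked by hand). I would also remark that the recurrence itself is valid for all $n\ge 2$ once these two initializations are set.

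The argument is essentially bookkeeping; the only potential obstacle is keeping track of cofactor signs in the nested expansion, which is why I would be explicit about the factors $(-1)^{2n-1}$ and $(-1)^{2(n-1)}$ appearing in the two Laplace expansions. No special structure of the base field is used, so the statement holds over any commutative ring and in particular over $\F_q$.
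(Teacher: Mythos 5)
Your proof is correct and follows essentially the same route as the paper: a Laplace expansion of $\det(T_n(a)-\lambda I_n)$ along the last row, with the off-diagonal minor handled by a second expansion along its last column to produce $\phi_{n-2}(\lambda)$. The only cosmetic difference is that the paper treats $n=2$ and $n=3$ as explicit base cases and phrases the general step as passing from $n$ to $n+1$, whereas you run the expansion directly at index $n$ with the cofactor signs written out.
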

\begin{proof}
For $n=2$ we have
\be
T_2(a)-\lambda I_2=\left[
\begin{array}{cc}
a-\lambda & 1 \\
1 & a - \lambda
\end{array}
\right]
\nn\ee
and hence
\be
\phi_2(\lambda)= \det\left(T_2(a)-\lambda I_2\right)=(a-\lambda)^2 -1 = (a-\lambda)\phi_1(\lambda) - \phi_0(\lambda).
\nn\ee
This completes the proof for $n=2$.

For $n=3$ we have
\be
T_3(a)-\lambda I_3= \left[
\begin{array}{ccc}
a-\lambda & 1 & 0 \\
1 & a-\lambda & 1 \\
0 & 1 & a-\lambda
\end{array}
\right].
\nn\ee
Considering the expansion of $\det\left(T_3(a)-\lambda I_3\right)$ using the last row we obtain
\be
\phi_3(\lambda)= (a-\lambda)
\left|
\begin{array}{cc}
a-\lambda & 1 \\
1 & a- \lambda\end{array}
\right|
-
\left|
\begin{array}{cc}
a-\lambda & 0 \\
1 & 1 \end{array}
\right|
=(a-\lambda) \phi_2(\lambda) - \phi_1(\lambda).
\nn\ee
 This completes the proof for $n=3$.

For $n \ge 3$ we will show that the lemma holds for $n+1$. Note that this will complete the proof. Assume that $n \ge 3$. For the
$(n+1) \times (n+1)$ tridiagonal Toeplitz  matrix $T_{n+1}(a)$ depending on $a$
we have
\be
T_{n+1}(a)=\left[
\begin{array}{ccc}
T_{n-1}(a) & C_{n-1} & 0_{(n-1)\times 1} \\
R_{n-1} & a & 1 \\
0_{1 \times (n-1)} & 1 & a
\end{array}
\right].
\nn\ee
Here $T_{n-1}(a)$ is the $(n-1) \times (n-1)$ triagonal Toeplitz matrix depending on $a$, $0_{1 \times (n-1)}$ is the $1 \times (n-1)$ matrix whose all entries are $0$
and $0_{(n-1) \times 1}$ is the $(n-1) \times 1$ matrix whose all entries are $0$. Moreover, $R_{n-1}$ is an $1 \times (n-1)$ row matrix and $C_{n-1}$ is an $(n-1) \times 1$ column matrix.

Considering the expansion of $\det\left( T_{n+1}(a)-\lambda I_{n+1}\right)$ using the last row we obtain
\be
\begin{array}{rcl}
\dd \phi_{n+1}(\lambda) & = &  \dd
\left|
\begin{array}{ccc}
T_{n-1}(a) - \lambda I_{n-1} & C_{n-1} & 0_{(n-1)\times 1} \\
R_{n-1} & a-\lambda & 1 \\
0_{1\times (n-1)} & 1 & a-\lambda
\end{array}
\right|
\\ \\
& = &  \dd (a-\lambda)
\left|
\begin{array}{cc}
T_{n-1}(a) - \lambda I_{n-1} & C_{n-1} \\
R_{n-1} & a-\lambda
\end{array}
\right|
-
\left|
\begin{array}{cc}
T_{n-1}(a) -\lambda I_{n-1} & 0_{(n-1)\times 1} \\
R_{n-1} & 1
\end{array}
\right| \\ \\
 & = & \dd (a-\lambda)
\left|
\begin{array}{cc}
T_{n-1}(a) - \lambda I_{n-1} & C_{n-1} \\
R_{n-1} & a-\lambda
\end{array}
\right|
- \phi_{n-1}(a).
\end{array}
\nn\ee
In the last equality we use the expansion of $\left|
\begin{array}{cc}
T_{n-1}(a) -\lambda I_{n-1} & 0_{(n-1)\times 1} \\
R_{n-1} & 1
\end{array}
\right|$
using the last column. Note that
\be
\left[
\begin{array}{cc}
T_{n-1}(a) - \lambda I_{n-1} & C_{n-1} \\
R_{n-1} & a-\lambda
\end{array}
\right]
=T_n(a)-\lambda I_n.
\nn\ee
Hence we conclude that
\be
\phi_{n+1}(\lambda)=(a-\lambda)\phi_n(\lambda) - \phi_{n-1}(\lambda)
\nn\ee
for $n \ge 3$, which completes the proof.
\end{proof}

For  $n \ge0$, $a \in \F_q$ and $x \in \overline{\F}_q$, let $\psi_n$ be the function on $\overline{\F}_q$ defined as
\be
\begin{array}{rcc}
\psi_n:\overline{\F}_q & \ra & \overline{\F}_q \\
x & \mapsto & \phi_n(a-x).
\end{array}
\nn\ee

\begin{lemma} \label{lemma3}
Under notation as above we have
\be
\psi_n(x)= x\psi_{n-1}(x) - \psi_{n-2}(x)
\nn\ee
for $n \ge 2$ with $\psi_1(x)=x$ and $\psi_0(x)=1$.
\end{lemma}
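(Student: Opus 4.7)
The plan is to derive this directly from Lemma \ref{lemma2} by the change of variable $\lambda = a - x$, which transforms the variable-coefficient recurrence $(a-\lambda)$ into the constant-coefficient-looking recurrence with coefficient $x$ (and sets the stage for the connection with Dickson polynomials in the next step).

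First I would dispatch the base cases. By the definitions stated immediately after Lemma \ref{lemma2}, one has $\phi_0(\lambda) = 1$ identically, so $\psi_0(x) = \phi_0(a-x) = 1$. Likewise $\phi_1(\lambda) = a - \lambda$, hence $\psi_1(x) = \phi_1(a-x) = a - (a-x) = x$, matching the claimed initial values.

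For the inductive step, I would just substitute. Lemma \ref{lemma2} gives, for every $n \ge 2$ and every $\lambda \in \overline{\F}_q$,
\be
\phi_n(\lambda) = (a-\lambda)\phi_{n-1}(\lambda) - \phi_{n-2}(\lambda). \nn
\ee
Specializing to $\lambda = a - x$, the coefficient becomes $a - \lambda = x$, so
\be
\phi_n(a-x) = x\,\phi_{n-1}(a-x) - \phi_{n-2}(a-x), \nn
\ee
which is precisely $\psi_n(x) = x\,\psi_{n-1}(x) - \psi_{n-2}(x)$ by the definition of $\psi_n$.

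There is essentially no obstacle here; the statement is a one-line reparametrization of Lemma \ref{lemma2}. The only thing worth emphasizing is that the purpose of the substitution is to put the recurrence into the canonical form $\psi_n = x\psi_{n-1} - \psi_{n-2}$ with $\psi_0 = 1$, $\psi_1 = x$, which matches the three-term recurrence of the (normalized) Dickson polynomials of the second kind. That identification will be what allows the authors to invoke \cite{DicksonBook,BZ} to read off the spectrum of $T_n(a)$ explicitly in the sequel.
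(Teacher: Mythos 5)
Your proof is correct and follows essentially the same route as the paper: verify the base cases $\psi_0(x)=1$, $\psi_1(x)=x$ directly, then substitute $\lambda=a-x$ into the recurrence of Lemma \ref{lemma2} so that the coefficient $a-\lambda$ becomes $x$. Nothing is missing.
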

\begin{proof}
For $n=1$ and $n=0$ we have
\be
\psi_1(x)=\phi_1(a-x)=a-(a-x)=x, \; \mbox{and} \; \psi_0(x)=\phi_0(a-x)=1.
\nn\ee
For $n \ge 2$ we have
\be
\begin{array}{rcl}
\psi_n(x) & = & \phi_n(a-x) \\
& = &\left(a-(a-x)\right) \phi_{n-1}(a-x) - \phi_{n-2}(a-x), \;\; \mbox{using Lemma \ref{lemma2}}, \\
& = & x \phi_{n-1}(a-x)-\phi_{n-2}(a-x) \\
& = & x \psi_{n-1}(x) - \psi_{n-2}(x), \;\; \mbox{by definitions of $\psi_{n-1}(x)$ and $\psi_{n-2}(x)$.}
\end{array}
\nn\ee
This completes the proof.
\end{proof}

Recall that (see, for example, \cite{DicksonBook}) for $\alpha \in \F_q$ and $n \ge 0$, the Dickson polynomial of the second kind
\be
E_n(x,\alpha) \in \F_q[x]
\nn\ee
is defined recursively
\be
E_n(x,\alpha)=xE_{n-1}(x,\alpha) - \alpha E_{n-2}(x,\alpha)
\nn\ee
with the initial conditions $E_1(x,\alpha)=x$ and $E_0(x)=1$.

We put $\alpha=1$ and denote $E_n(x)=E_n(x,1)$ throughout the paper. We obtain that $E_n(x) \in \F_q[x]$ is the polynomial defined resursively
\be \label{def.En}
E_n(x)=xE_{n-1}(x)-E_{n-2}(x)
\ee
with the initial conditions $E_1(x)=x$ and $E_0(x)=1$.

Combining Lemma \ref{lemma2}, Lemma \ref{lemma3} and (\ref{def.En}) we prove the following proposition immediately.

\begin{proposition} \label{proposition1}
Under notation as above we have
\be
\det\left( T_n(a)-xI_n\right)=\phi_n(x)=E_n(a-x),
\nn\ee
for all $n \ge 1$ and $x \in \overline{\F}_q$.
\end{proposition}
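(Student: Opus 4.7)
The plan is to observe that Proposition \ref{proposition1} is essentially immediate from the three ingredients already on the page, and to spell out the identification of recurrences. The two equalities in the proposition are of different character: the first, $\det(T_n(a)-xI_n)=\phi_n(x)$, is nothing but the definition of $\phi_n$, so only the second equality $\phi_n(x)=E_n(a-x)$ requires argument.

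First I would recall from Lemma \ref{lemma3} that the auxiliary function $\psi_n(x)=\phi_n(a-x)$ satisfies
\[
\psi_n(x)=x\psi_{n-1}(x)-\psi_{n-2}(x)\qquad(n\ge 2),
\]
with $\psi_0(x)=1$ and $\psi_1(x)=x$. On the other hand, the Dickson polynomial $E_n(x)=E_n(x,1)$ is defined by (\ref{def.En}) as
\[
E_n(x)=xE_{n-1}(x)-E_{n-2}(x)\qquad(n\ge 2),
\]
with $E_0(x)=1$ and $E_1(x)=x$. The two sequences of functions thus satisfy the same linear recurrence of order two with the same two initial terms.

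From here I would run a trivial induction on $n$ to conclude $\psi_n(x)=E_n(x)$ for every $n\ge 0$ (base cases $n=0,1$ are the matched initial conditions; the inductive step is a one-line substitution into the shared recurrence). Unwinding the definition of $\psi_n$, this identity reads $\phi_n(a-x)=E_n(x)$ for all $x\in\overline{\F}_q$. Replacing $x$ by $a-x$ (equivalently, setting $y=a-x$) gives $\phi_n(x)=E_n(a-x)$, which is the desired equality.

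There is no real obstacle: the entire content of the proposition is the observation that Lemma \ref{lemma2}'s three-term recurrence, after the affine change of variable $\lambda\mapsto a-x$ executed in Lemma \ref{lemma3}, is exactly the Dickson recurrence with parameter $\alpha=1$. The only thing to be a little careful about is the direction of the substitution at the end (writing $\phi_n(x)=E_n(a-x)$ rather than $\phi_n(a-x)=E_n(x)$), which is a purely cosmetic step.
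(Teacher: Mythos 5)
Your proposal is correct and follows exactly the paper's route: the paper simply states that the proposition follows ``immediately'' by combining Lemma \ref{lemma2}, Lemma \ref{lemma3} and (\ref{def.En}), and your write-up just makes the implicit induction and the final substitution $x\mapsto a-x$ explicit. Nothing is missing.
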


We recall the following result due to Bhargava and Zieve \cite[Theorem 4]{BZ}.

\begin{theorem} \label{theorem1.BZ}
Let $n \ge 1$ be an integer. Assume that $\gcd(n+1,q)=1$. We have the following cases
\begin{itemize}
\item \underline{$\Char \F_q$ is odd:}
\be
E_n(x)=\prod_{i=1}^n \left( x - \left( \theta^i + \theta^{-i}\right)\right),
\nn\ee
where $\theta$ is a primitive $2(n+1)$-th root of $1$.

\item  \underline{ $\Char \F_q$ is even:}  then $n$ is even and
\be
E_n(x)=\prod_{i=1}^{n/2} \left( x - \left( \theta^i + \theta^{-i}\right)\right)^2,
\nn\ee
where $\theta$ is a primitive $(n+1)$-th root of $1$.
\end{itemize}
\end{theorem}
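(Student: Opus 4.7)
The natural approach is to use the classical Chebyshev-type substitution $x = y + y^{-1}$ in the Laurent ring $\overline{\F}_q[y, y^{-1}]$. First, by induction on $n$ via the recurrence (\ref{def.En}), I would establish the master identity
\begin{equation*}
(y - y^{-1})\, E_n(y + y^{-1}) \;=\; y^{n+1} - y^{-(n+1)}.
\end{equation*}
The base cases $n = 0, 1$ are immediate, and the inductive step follows from the elementary expansion $(y + y^{-1})(y^k - y^{-k}) = (y^{k+1} - y^{-(k+1)}) + (y^{k-1} - y^{-(k-1)})$, which exactly mirrors the recurrence satisfied by $E_n$.

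In odd characteristic, this identity converts the root-finding problem for $E_n$ into finding $y \in \overline{\F}_q$ with $y^{2(n+1)} = 1$ and $y \ne \pm 1$. The hypothesis $\gcd(n+1, q) = 1$ ensures $y^{2(n+1)} - 1$ is separable, so the group of $2(n+1)$-th roots of unity is cyclic of order $2(n+1)$; let $\theta$ be a generator. Excluding $\theta^0 = 1$ and $\theta^{n+1} = -1$, and identifying $\theta^i$ with $\theta^{-i}$ (which give the same value of $y + y^{-1}$), one obtains exactly $n$ candidate roots $\theta^i + \theta^{-i}$ for $1 \le i \le n$. A short argument shows they are pairwise distinct, and since $E_n$ is monic of degree $n$, the factorization is forced.

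In even characteristic, $\gcd(n+1, q) = 1$ forces $n+1$ odd and hence $n$ even. The master identity becomes $(y + y^{-1})\,E_n(y + y^{-1}) = y^{n+1} + y^{-(n+1)}$. Using $y^{n+1} + y^{-(n+1)} = y^{-(n+1)}(y^{n+1} + 1)^2$ together with $y + y^{-1} = y^{-1}(y + 1)^2$, cancelling the factor $(y + 1)^2$ yields
\begin{equation*}
E_n(y + y^{-1}) \;=\; y^{-n}\bigl(1 + y + y^2 + \cdots + y^n\bigr)^2.
\end{equation*}
With $\theta$ a primitive $(n+1)$-th root of unity, the roots of $1 + y + \cdots + y^n$ are the $\theta^i$ for $1 \le i \le n$, paired under the involution $i \leftrightarrow n + 1 - i$; since $n + 1$ is odd this pairing is fixed-point free, producing $n/2$ distinct values $\theta^i + \theta^{-i}$ each appearing as a double root.

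The main obstacle I anticipate is the bookkeeping in the even-characteristic case: one must carefully track multiplicities when passing from the Laurent-polynomial identity in $y$ to a polynomial identity in $x$, and confirm that the squared structure reflects a genuine double root of $E_n(x)$ rather than being an artefact of the generically two-to-one map $y \mapsto y + y^{-1}$. A clean way to settle this is to compare degrees and leading coefficients on both sides, or equivalently to evaluate $E_n'(x)$ at each candidate root and verify it vanishes.
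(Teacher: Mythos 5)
The paper itself does not prove this theorem: it is quoted from Bhargava and Zieve \cite[Theorem 4]{BZ}, so there is no internal argument to compare yours against, and I assess your proposal on its own merits. Your route is the standard one (and essentially the one in \cite{BZ}): the identity $(y-y^{-1})E_n(y+y^{-1})=y^{n+1}-y^{-(n+1)}$ follows by the induction you indicate; in odd characteristic the $n$ values $\theta^i+\theta^{-i}$, $1\le i\le n$, are roots of $E_n$ and are pairwise distinct (equality forces $i\equiv\pm j\pmod{2(n+1)}$, impossible for $1\le i<j\le n$), so monicity and degree $n$ close the case; and your reduction $E_n(y+y^{-1})=y^{-n}(1+y+\cdots+y^n)^2$ in characteristic $2$ is correct. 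The one point you flag, the multiplicities in even characteristic, is genuinely the only gap, and of your two suggested fixes only the derivative one is self-contained: merely comparing degrees and leading coefficients does not rule out $E_n$ having $n/2$ further roots besides the $x_i=\theta^i+\theta^{-i}$. The cleanest completion is to set $P(x)=\prod_{i=1}^{n/2}(x-x_i)$ and note that
\[
P(y+y^{-1})=\prod_{i=1}^{n/2}y^{-1}(y-\theta^i)(y-\theta^{-i})=y^{-n/2}\prod_{i=1}^{n}(y-\theta^i),
\]
so that your identity reads exactly $E_n(y+y^{-1})=P(y+y^{-1})^2$; since no nonzero polynomial $F$ can satisfy $F(y+y^{-1})=0$ identically (its image has top term $cy^{\deg F}$), this forces $E_n=P^2$. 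Alternatively, your derivative check does work: differentiating $E_n(y+y^{-1})=y^{-n}g(y)^2$ with $g(y)=1+y+\cdots+y^n$ gives, in characteristic $2$, $(1+y^{-2})\,E_n'(y+y^{-1})=n\,y^{-n-1}g(y)^2$, whose right-hand side vanishes at $y=\theta^i$ while $1+y^{-2}=(1+y^{-1})^2\ne 0$ there, so each $x_i$ is a root of multiplicity at least $2$ and the degree count $n=2\cdot(n/2)$ forces the stated factorization. With either completion made explicit, the proof is correct.
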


\begin{definition} \label{defition.Cna}
For $a \in \F_q$ and an integer $n \ge 1$, let $C_n(a)$ be the $\F_q$-linear code of length $2n$ and dimension $n$ whose generator polynomial is the $n\times 2n$ matrix given by
\be
\left[ I_n  \; | \; T_n(a) \right].
\nn\ee
\end{definition}

\subsection{ LCD double Toeplitz codes}

Now we present our first characterization result on whether $C_n(a)$ is LCD or not.

\begin{theorem} \label{theorem.LCD.char.even}
For $a \in \F_q$ and an integer $n \ge 1$, consider the $[2n,n]_q$ code $C_n(a)$ given in Definition \ref{defition.Cna}. Assume that $\gcd(n+1,q)=1$ and $\Char \F_q$ is even. Then $n$ is even and $C_n(a)$ is LCD if and only if
\be
a \not \in \left\{-1 + \theta^i + \theta^{-i}: 1 \le i \le \frac{n}{2}\right\},
\nn\ee
where $\theta$ is a primitive $(n+1)$-th root of $1$.
\end{theorem}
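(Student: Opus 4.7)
The plan is to translate the LCD condition into a non-vanishing statement for $E_n$ and then read off the answer from the Bhargava--Zieve factorization (Theorem \ref{theorem1.BZ}). Since $T_n(a)$ is symmetric, the generator matrix $G=[I_n \mid T_n(a)]$ satisfies $GG^t = I_n + T_n(a)^2$, so the standard criterion says that $C_n(a)$ is LCD precisely when $I_n + T_n(a)^2$ is invertible, equivalently when $-1$ is not an eigenvalue of $T_n(a)^2$.

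Next I would invoke the even-characteristic case of Lemma \ref{lemma1}, which collapses the condition ``$-1$ is an eigenvalue of $T_n(a)^2$'' to ``$-1$ is an eigenvalue of $T_n(a)$.'' Therefore $C_n(a)$ is LCD if and only if $\phi_n(-1)\neq 0$. Using Proposition \ref{proposition1}, this is the same as $E_n(a-(-1))=E_n(a+1)\neq 0$.

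At this point I would check the parity hypothesis: since $\Char \F_q$ is even, $q$ is even, so $\gcd(n+1,q)=1$ forces $n+1$ to be odd, hence $n$ is even, which accounts for the first conclusion of the theorem. Then I would apply Theorem \ref{theorem1.BZ} in the even-characteristic case to write
\be
E_n(a+1)=\prod_{i=1}^{n/2}\bigl((a+1)-(\theta^i+\theta^{-i})\bigr)^2,
\nn\ee
where $\theta$ is a primitive $(n+1)$-th root of unity. The product is nonzero exactly when $a+1\neq \theta^i+\theta^{-i}$ for every $1\le i\le n/2$, i.e.\ $a\notin\{-1+\theta^i+\theta^{-i}:1\le i\le n/2\}$ (noting that in characteristic two $-1=1$, so the sign is immaterial). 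This yields the stated characterization.

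There is no real obstacle here: once Lemma \ref{lemma1}, Proposition \ref{proposition1}, and Theorem \ref{theorem1.BZ} are in hand, the argument is essentially a substitution. The only mild subtlety worth flagging is the parity check that $n$ must be even, which is what lets us apply the even-characteristic branch of Theorem \ref{theorem1.BZ} cleanly.
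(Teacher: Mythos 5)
Your proposal is correct and follows essentially the same route as the paper's own proof: reduce LCD to invertibility of $GG^t=I_n+T_n(a)^2$, apply the even-characteristic case of Lemma \ref{lemma1}, translate via Proposition \ref{proposition1} to $E_n(a+1)\neq 0$, and conclude with Theorem \ref{theorem1.BZ}. The explicit parity check that $\gcd(n+1,q)=1$ with $q$ even forces $n$ even is a welcome clarification that the paper leaves implicit.
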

\begin{proof}
It is well known that $C_n(a)$ is LCD (see, for example, \cite{Massey}) if and only if $GG^T$ is invertible, where $G=\left[ I_n  \; | \; T_n(a) \right]$. Note that
\be
GG^T=\left[ I_n  \; | \; T_n(a) \right] \left[\begin{array}{c} I_n \\
T_n(a)
\end{array}
\right]= I_n + T_n(a)^2,
\nn\ee
where we use the fact that $T_n(a)$ is symmetric. Hence $C_n(a)$ is LCD iff $-1$ is not an eigenvalue of $T_n^2(a)$. Using Lemma \ref{lemma1} we conclude that $C_n(a)$ is LCD iff
$-1$ is not an eigenvalue of $T_n(a)$. It follows from the definition of $\phi_n(x)$ that $-1$ is an eigenvalue of $T_n(a)$ iff $\phi_n(-1)=0$. Using Proposition \ref{proposition1} we have that $\phi_n(-1)=E_n(a+1)$. Finally using Theorem \ref{theorem1.BZ} we complete the proof.
\end{proof}

In odd characteristic we have the following result.

\begin{theorem} \label{theorem.LCD.char.odd}
For $a \in \F_q$ and an integer $n \ge 1$, consider the $[2n,n]_q$ code $C_n(a)$ given in Definition \ref{defition.Cna}. Assume that $\gcd(n+1,q)=1$ and $\Char \F_q$ is odd. Then
$C_n(a)$ is LCD if and only if
\be
a \not \in \left\{-\mu + \theta^i + \theta^{-i}: 1 \le i \le n\right\} \cup
\left\{\mu + \theta^i + \theta^{-i}: 1 \le i \le n\right\},
\nn\ee
where $\mu^2=-1$ and $\theta$ is a primitive $2(n+1)$-th root of $1$.
\end{theorem}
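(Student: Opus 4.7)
The plan is to mirror the structure of the proof of Theorem \ref{theorem.LCD.char.even}, swapping in the odd-characteristic branches of Lemma \ref{lemma1} and Theorem \ref{theorem1.BZ}. First, I would note that $T_n(a)$ is symmetric, so the generator matrix $G=[I_n \mid T_n(a)]$ satisfies
\begin{equation*}
GG^T = I_n + T_n(a)^2,
\end{equation*}
and the standard LCD criterion says $C_n(a)$ is LCD if and only if $GG^T$ is invertible, equivalently $-1$ is not an eigenvalue of $T_n(a)^2$.

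Now I would apply the odd-characteristic case of Lemma \ref{lemma1}: in this setting $-1$ is an eigenvalue of $T_n(a)^2$ if and only if $\mu$ or $-\mu$ is an eigenvalue of $T_n(a)$, where $\mu \in \F_{q^2}$ satisfies $\mu^2=-1$. Equivalently, $C_n(a)$ fails to be LCD exactly when $\phi_n(\mu)=0$ or $\phi_n(-\mu)=0$. By Proposition \ref{proposition1}, $\phi_n(\lambda)=E_n(a-\lambda)$, so these two conditions translate to $E_n(a-\mu)=0$ and $E_n(a+\mu)=0$ respectively.

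Next I would invoke the odd-characteristic case of Theorem \ref{theorem1.BZ}, which under the hypothesis $\gcd(n+1,q)=1$ gives the complete factorization
\begin{equation*}
E_n(x) = \prod_{i=1}^n \bigl( x - (\theta^i + \theta^{-i}) \bigr),
\end{equation*}
where $\theta$ is a primitive $2(n+1)$-th root of unity. Consequently $E_n(a-\mu)=0$ if and only if $a = \mu + \theta^i + \theta^{-i}$ for some $1 \le i \le n$, and similarly $E_n(a+\mu)=0$ if and only if $a = -\mu + \theta^i + \theta^{-i}$ for some $1\le i\le n$. Taking the union of these two forbidden sets yields exactly the excluded set in the statement, and negating gives the claimed characterization.

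There is essentially no hard step: the argument is a direct concatenation of results already established. The only mild subtlety to be careful about is that $\mu$ lives in $\F_{q^2}$ rather than $\F_q$, so the eigenvalue equations $\phi_n(\pm\mu)=0$ are statements over $\overline{\F}_q$; this is exactly the generality in which Proposition \ref{proposition1} and Theorem \ref{theorem1.BZ} were stated, so no extra work is needed. I would also remark (though not prove) that since $a\in\F_q$, the condition $a=\pm\mu+\theta^i+\theta^{-i}$ implicitly constrains $\theta^i+\theta^{-i}$ to lie in $\F_q\mp\mu$, which can be exploited later to give cleaner sufficient conditions for LCD-ness.
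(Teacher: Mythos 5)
Your argument is correct and follows exactly the paper's route: the LCD criterion via $GG^T=I_n+T_n(a)^2$, the odd-characteristic case of Lemma \ref{lemma1} to reduce to $\pm\mu$ being eigenvalues of $T_n(a)$, Proposition \ref{proposition1} to translate this into $E_n(a\mp\mu)=0$, and the Bhargava--Zieve factorization to read off the forbidden set. The paper's own proof is just a terser version of the same chain of reductions, so there is nothing to add.
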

\begin{proof}
The proof is similar to that of Theorem \ref{theorem.LCD.char.even}. We only indicate the different steps in this proof. Note that $-1$ is an eigenvalue of $T^2_n(a)$ iff
$-\mu$ or $\mu$ is an eigenvalue of $T_n(a)$ by Lemma \ref{lemma1}. Hence $C_n(a)$ is not LCD iff $E_n(a+\mu)=0$ or $E_n(a-\mu)=0$. We complete the proof using the similar steps as in the proof of Theorem \ref{theorem.LCD.char.even}.
\end{proof}
The following corollaries are immediate.

\begin{corollary} \label{corollary.LCD.char.even}
Assume that $\gcd(n+1,q)=1$ and $\Char \F_q$ is even. If $q > \frac{n}{2}$, then there exists $a \in \F_q$ such that $C_n(a)$ is LCD.
\end{corollary}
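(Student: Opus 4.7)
The plan is to apply the characterization in Theorem \ref{theorem.LCD.char.even} and then invoke a simple counting argument. By that theorem, under the hypotheses $\gcd(n+1,q)=1$ and $\Char \F_q$ even, the code $C_n(a)$ fails to be LCD precisely when $a$ lies in the set
\[
S = \left\{-1 + \theta^i + \theta^{-i}: 1 \le i \le \frac{n}{2}\right\} \subseteq \overline{\F}_q,
\]
where $\theta$ is a primitive $(n+1)$-th root of unity. So the goal reduces to producing some $a \in \F_q \setminus S$.

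First I would observe that the set $S$ is indexed by $i \in \{1,2,\ldots,n/2\}$, so it has cardinality at most $n/2$. In particular, $|S \cap \F_q| \le n/2$. Since by hypothesis $q = |\F_q| > n/2 \ge |S \cap \F_q|$, the complement $\F_q \setminus (S \cap \F_q) = \F_q \setminus S$ is nonempty. Any element $a$ in this complement then lies in $\F_q$ and satisfies the criterion of Theorem \ref{theorem.LCD.char.even}, so $C_n(a)$ is LCD.

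There is no real obstacle here: once Theorem \ref{theorem.LCD.char.even} is in hand, the corollary is a one-line pigeonhole argument on the sizes $|\F_q| = q$ and $|S| \le n/2$. The only minor point to note is that one does not need to analyze how many of the elements $\theta^i + \theta^{-i} - 1$ actually lie in $\F_q$ (which would require tracking Galois orbits); the crude bound $|S| \le n/2$ is already enough for the inequality $q > n/2$ to force $\F_q \setminus S \neq \emptyset$.
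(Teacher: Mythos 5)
Your argument is correct and is essentially identical to the paper's own proof: both define the exceptional set $S=\{-1+\theta^i+\theta^{-i}: 1\le i\le n/2\}$, bound $|S\cap\F_q|\le n/2$, and use $q>n/2$ to find $a\in\F_q\setminus S$, then apply Theorem \ref{theorem.LCD.char.even}. No issues.
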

\begin{proof}
Let $S=\{-1 + \theta^i + \theta^{-i}: 1 \le i \le n/2\}$, where $\theta$ is a primitive $(n+1)$-th roof of $1$. Note that $|S| \le n/2$ and hence $|S \cap \F_q| \le n/2$. As $q> n/2$, there exists $a \in \F_q \setminus S$. Using such $a$ and Theorem \ref{theorem.LCD.char.even} we complete the proof.
\end{proof}

\begin{corollary} \label{corollary.LCD.char.odd}
Assume that $\gcd(n+1,q)=1$ and $\Char \F_q$ is odd. If $q >2n$, then there exists $a \in \F_q$ such that $C_n(a)$ is LCD.
\end{corollary}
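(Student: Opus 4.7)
The plan is to mirror the argument used in Corollary \ref{corollary.LCD.char.even}, but relative to the exceptional set appearing in Theorem \ref{theorem.LCD.char.odd}. By that theorem, under the hypothesis $\gcd(n+1,q)=1$ and $\Char \F_q$ odd, the code $C_n(a)$ fails to be LCD precisely when $a$ lies in
\[
S = \left\{-\mu + \theta^i + \theta^{-i}: 1 \le i \le n\right\} \cup \left\{\mu + \theta^i + \theta^{-i}: 1 \le i \le n\right\},
\]
where $\mu^2 = -1$ in $\F_{q^2}$ and $\theta$ is a primitive $2(n+1)$-th root of unity in $\overline{\F}_q$.

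First I would bound $|S|$ by the triangle inequality on cardinalities: each of the two sets in the union contains at most $n$ elements, so $|S| \le 2n$. Consequently $|S \cap \F_q| \le |S| \le 2n$. Under the assumption $q > 2n$, this forces $\F_q \setminus S$ to be non-empty, so we may pick $a \in \F_q$ with $a \notin S$. By Theorem \ref{theorem.LCD.char.odd}, the code $C_n(a)$ is then LCD, which completes the proof.

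The argument is essentially a counting one and does not have a genuinely hard step; the only small subtlety is that $S$ a priori lives in $\overline{\F}_q$ (or more precisely in $\F_{q^2}$, since $\mu \in \F_{q^2}$ and $\theta + \theta^{-1}$ lies in some finite extension of $\F_q$), so one must emphasize that the relevant quantity is $|S \cap \F_q|$ rather than $|S|$. The crude bound $|S \cap \F_q| \le |S| \le 2n$ suffices for the conclusion, so no further analysis of which elements of $S$ actually lie in $\F_q$ is needed.
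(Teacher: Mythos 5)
Your proof is correct and follows essentially the same route as the paper: define the exceptional set $S$ from Theorem \ref{theorem.LCD.char.odd}, bound $|S\cap\F_q|\le|S|\le 2n$, and use $q>2n$ to find $a\in\F_q\setminus S$. No issues.
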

\begin{proof}
Let $S=\{-\mu + \theta^i + \theta^{-i}: 1 \le i \le n\} \cup \{\mu + \theta^i + \theta^{-i}: 1 \le i \le n\}$, where $\mu^2=-1$ and $\theta$ is a primitive $2(n+1)$-th root of $1$. Note that $|S| \le 2n$ and hence $|S \cap \F_q| \le 2n$. As $q> 2n$, there exists $a \in \F_q \setminus S$. Using such $a$ and Theorem \ref{theorem.LCD.char.odd} we complete the proof.
\end{proof}

Corollary \ref{corollary.LCD.char.even} and Corollary \ref{corollary.LCD.char.odd} give
that there exist $a \in \F_q$ so that $C_n(a)$ is LCD for infinitely many $C_n(a)$, namely if $q$ is large compared to $n$.
The corresponding conditions are not necessary for the existence.
Under arithmetic conditions, the following two corollaries give simple existence results for even and odd characteristics.

\begin{corollary} \label{cor.arith.existence.q.even}
Assume  that $q$ is even and $\gcd(n+1,q(q^2-1))=1$.
We have that $C_n(a)$ is LCD for all $a \in \F_q.$
\end{corollary}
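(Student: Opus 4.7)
The plan is to reduce to Theorem \ref{theorem.LCD.char.even}. The hypothesis $\gcd(n+1,q(q^2-1))=1$ implies in particular $\gcd(n+1,q)=1$, so that theorem applies. Since $q$ is a power of $2$, coprimality forces $n+1$ odd, hence $n$ even; and $-1=1$ in $\F_q$. By Theorem \ref{theorem.LCD.char.even}, $C_n(a)$ fails to be LCD iff $a = 1+\theta^i+\theta^{-i}$ for some $1 \le i \le n/2$, where $\theta$ is a primitive $(n+1)$-th root of unity in $\overline{\F}_q$. It therefore suffices to show that for every such $i$, the element $\theta^i+\theta^{-i}$ does not lie in $\F_q$; then no $a \in \F_q$ can coincide with $1+\theta^i+\theta^{-i}$, and the conclusion follows for all $a \in \F_q$.

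To check this Galois non-descent, I would argue by contradiction: if $\beta:=\theta^i+\theta^{-i}\in \F_q$ for some $1\le i\le n/2$, then $\beta^q=\beta$. Setting $u=\theta^i$, this becomes $u^q+u^{-q}=u+u^{-1}$, and clearing denominators gives $u^{2q}-u^{q+1}-u^{q-1}+1=0$. The key observation is that this polynomial factors as $(u^q-u)(u^q-u^{-1})=0$, so either $u^{q-1}=1$ or $u^{q+1}=1$. In either case the order of $u$ divides $q^2-1$. But $u=\theta^i$ has order dividing $n+1$, so this order divides $\gcd(n+1,q^2-1)$, which equals $1$ by hypothesis. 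Hence $u=1$, i.e.\ $(n+1)\mid i$, contradicting $1\le i \le n/2<n+1$.

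The only real content lies in spotting the factorization $(u^q-u)(u^q-u^{-1})=0$ of the condition $\beta\in \F_q$; once that is in hand, the arithmetic hypothesis $\gcd(n+1,q^2-1)=1$ slots in exactly to kill both possibilities simultaneously, which is why the corollary packages $q-1$ and $q+1$ together into the single factor $q^2-1$. No step beyond this should pose any real difficulty.
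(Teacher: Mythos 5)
Your proof is correct and follows essentially the same route as the paper: both reduce to Theorem \ref{theorem.LCD.char.even} and show that $\gcd(n+1,q^2-1)=1$ forces $\theta^i+\theta^{-i}\notin\F_q$ for $1\le i\le n/2$, by deducing that $\theta^i$ would have to lie in $\F_{q^2}$ and hence be $1$. The only cosmetic difference is that the paper gets $\theta^i\in\F_{q^2}$ from the quadratic $t^2-ut+1=0$ over $\F_q$, while you make the same fact explicit via the factorization $(u^q-u)(u^q-u^{-1})=0$.
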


\begin{proof}
 Note that $\gcd(n+1,q)=1$ and $\gcd(n+1,q^2-1)=1$. Let $\theta$ be a primitive $(n+1)$-th root of $1$.
Let $1 \le i \le n/2$ be an arbitrary integer.
 Put $t=\theta^i.$ We have $ 1+t+1/t \in \F_q$ iff $( 1+t+1/t)^q= 1+t+1/t$ iff $u^q=u$ with $u=t+1/t.$
Thus $u$ is in $\F_q$ and $t$ is in $\F_{q^2}.$ Since $t^{n+1}=1,$ with $t\neq 1,$ this is impossible for  $\gcd(n+1,q^2-1)=1.$
The result follows by Theorem \ref{theorem.LCD.char.even}.
\end{proof}

The analog of Corollary \ref{cor.arith.existence.q.even} for the odd characteristic is slightly more complicated.

\begin{corollary} \label{cor.arith.existence.q.odd}
Assume that $q$ is odd. Moreover, we assume the following:
\begin{itemize}
\item If $q \equiv 1 \mod 4$, then $\gcd(n+1, q(q^2-1)/2)=1$. \\
\item If $q \equiv 3 \mod 4$, then $\gcd(n+1, q)=1$ and $\gcd(n+1,(q^4-1)/2)$ divides $(q-1)/2$.
\end{itemize}
Let $\mu \in \F_{q^2}$ such that $\mu^2=-1$. We have the following:
\begin{itemize}
\item If $q \equiv 1 \mod 4$, then $C_n(a)$ is LCD for all $a\in \F_q \setminus \{\mu+2,-\mu+2,\mu-2,-\mu-2\}$. \\
\item If $q \equiv 3 \mod 4$, then $C_n(a)$ is LCD for all $a\in \F_q$.
\end{itemize}
\end{corollary}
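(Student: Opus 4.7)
The plan is to apply Theorem \ref{theorem.LCD.char.odd} and show that, under the stated hypotheses, the forbidden set
\be
S := \left\{-\mu + \theta^i + \theta^{-i} : 1 \le i \le n\right\} \cup \left\{\mu + \theta^i + \theta^{-i} : 1 \le i \le n\right\}
\nn\ee
contains no element of $\F_q$. This establishes both bullets of the statement (in fact in the strengthened form ``for all $a \in \F_q$''). Throughout, fix $1 \le i \le n$ and set $t = \theta^i$, so $t^{2(n+1)} = 1$; the question is when $\pm\mu + t + t^{-1} \in \F_q$.

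When $q \equiv 1 \pmod 4$, we have $\mu \in \F_q$, so the condition reduces to $t + t^{-1} \in \F_q$; applying the Frobenius $x \mapsto x^q$ gives $t^q + t^{-q} = t + t^{-1}$, which after multiplying by $t^q$ factors as $(t^q - t)(t^q - t^{-1}) = 0$. Hence $t^{q-1} = 1$ or $t^{q+1} = 1$, so the order of $t$ divides $\gcd(2(n+1), q^2-1)$. The hypothesis yields $\gcd(n+1, (q^2-1)/2) = 1$; since $q \equiv 1 \pmod 4$ implies $4 \mid (q^2-1)/2$, this forces $n+1$ odd, and a direct prime-by-prime comparison then gives $\gcd(2(n+1), q^2-1) = 2$. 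Thus $t \in \{1,-1\}$, which is incompatible with $1 \le i \le n$ and $\theta$ having order $2(n+1)$. (The four values $\pm\mu \pm 2$ excluded in the statement correspond precisely to the boundary indices $i = 0$ and $i = n+1$, already outside the range in Theorem \ref{theorem.LCD.char.odd}; so in truth those $a$ are LCD as well.)

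When $q \equiv 3 \pmod 4$, we use $\mu^q = -\mu$ (since $(q-1)/2$ is odd, giving $\mu^{q-1} = (\mu^2)^{(q-1)/2} = -1$). Assume $\pm\mu + t + t^{-1} \in \F_q$. Applying $x \mapsto x^q$ produces
\be
t^q + t^{-q} - t - t^{-1} = \pm 2\mu,
\nn\ee
while applying $x \mapsto x^{q^2}$ (now $\mu^{q^2} = \mu$) gives $t^{q^2} + t^{-q^2} = t + t^{-1}$, which factors exactly as in the previous case into $(t^{q^2} - t)(t^{q^2} - t^{-1}) = 0$. So $t^{q^2-1} = 1$ or $t^{q^2+1} = 1$, and the order of $t$ divides $\mathrm{lcm}(q^2-1, q^2+1) = (q^4-1)/2$ (using $\gcd(q^2-1, q^2+1) = 2$ for odd $q$). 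Together with $t^{2(n+1)} = 1$, the order then divides $\gcd(2(n+1), (q^4-1)/2)$.

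The decisive arithmetic step is the divisibility $\gcd(2(n+1), (q^4-1)/2) \mid q-1$, which I would derive from the elementary inequality $\gcd(2a, m) \mid 2 \gcd(a, m)$ (verified prime-by-prime: at odd primes the factor of $2$ is invisible, at the prime $2$ it contributes at most one extra power), applied to the hypothesis $\gcd(n+1, (q^4-1)/2) \mid (q-1)/2$. Given this, the order of $t$ divides $q-1$, so $t \in \F_q^*$; the left side of the displayed equation above then vanishes, while the right side $\pm 2\mu$ is nonzero, a contradiction. The main obstacle is precisely this $2$-adic bookkeeping: since $(q^4-1)/2$ remains even when $q \equiv 3 \pmod 4$, one must argue carefully that doubling $n+1$ costs at most a single factor of $2$ in the gcd, so that the hypothesis $\mid (q-1)/2$ upgrades exactly to $\mid q-1$ and not to anything weaker.
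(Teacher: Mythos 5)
Your proof is correct and follows essentially the same route as the paper's: reduce to Theorem \ref{theorem.LCD.char.odd} and rule out, via Frobenius and order-of-$t$ arguments together with the gcd hypotheses, that any element of the forbidden set lies in $\F_q$ (the paper, in the case $q \equiv 1 \bmod 4$, stops at the four values $\pm\mu\pm2$). The only differences are that you make explicit the $2$-adic bookkeeping which the paper settles with ``by assumption,'' and that you note $t=\pm1$ cannot occur for $1\le i\le n$, so for $q\equiv 1 \bmod 4$ your argument actually gives the slightly stronger conclusion that $C_n(a)$ is LCD for every $a\in\F_q$.
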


\begin{proof}
Note that $\gcd(n+1,q)=1$. Let $\theta$ be a primitive $2(n+1)$-th root of $1$.
Let $1 \le i \le n$ be an arbitrary integer.
Put $t=\theta^i.$ Also put $  w=\pm \mu + t + 1/t$. Assume that $ w \in \F_q$. Next we consider the cases separately:
\begin{itemize}
\item \underline{ $q \equiv 1 \mod 4$:} Here $\mu \in \F_q$ and hence $t+1/t \in \F_q$. This implies that $t \in \F_{q^2}$ and $t^{q^2-1}=1$. As $\gcd(q^2-1,2(n+1))=2$ by assumption, we obtain that $t \in \{-1,1\}$. Hence we have $\{\mu +t + 1/t, -\mu +t + 1/t\}  = \{\mu+2,\mu-2,-\mu+2,-\mu-2\}$. We complete the proof using Theorem \ref{theorem.LCD.char.odd}. \\
\item \underline{ $q \equiv 3 \mod 4$:} Here $\mu \in \F_{q^2} \setminus \F_q$ and hence $t+1/t \in \F_{q^2}$. This implies that $t \in \F_{q^4}$ and $t^{q^4-1}=1$. As $\gcd(q^4-1,2(n+1))$ divides $q-1$ by assumption, we obtain that $t \in \F_q$. Hence we have $\pm\mu +t + 1/t\in \F_{q^2} \setminus \F_q$. We complete the proof using Theorem \ref{theorem.LCD.char.odd}.
\end{itemize}
This completes the proof.\end{proof}

We also give simple examples that are not covered by these corollaries.
\begin{example} \label{example.LCD.odd.1}
Let $q=3$ and $n=3$ so that the condition $q> 2n$ of Corollary \ref{corollary.LCD.char.odd} does not hold.
Also the conditions of Corollary \ref{cor.arith.existence.q.odd} do not hold.
Let $\theta$ be a primitive $2(n+1)=8$-th root of $1$. Let $\mu=\theta^2$ so that $\mu^2=-1$.
Let $S=\{-\mu + \theta^i+\theta^{-i}: 1 \le i \le n\} \cup \{\mu + \theta^i+\theta^{-i}: 1 \le i \le n\}$. Using Magma \cite{Magma} we obtain that
$S=\{0, \theta^2,\theta^6\}$. By Theorem \ref{theorem.LCD.char.odd} we conclude that $C_n(a)$ is LCD for $a \in \F_q$ iff $a\in\{1, 2\}$.
\end{example}

\begin{example} \label{example.LCD.odd.2}
Let $q=3$ and $n=4$ so that the condition $q> 2n$ of Corollary \ref{corollary.LCD.char.odd} does not hold.
Also the conditions of Corollary \ref{cor.arith.existence.q.odd} do not hold.
Let $\theta$ be a primitive $2(n+1)=10$-th root of $1$. Let $\mu=\theta^5$ so that $\mu^2=-1$.
Let $S=\{-\mu + \theta^i+\theta^{-i}: 1 \le i \le n\} \cup \{\mu + \theta^i+\theta^{-i}: 1 \le i \le n\}$.
Let $w$ be a primitive element of $\F_{3^4}$ such that $w^4+2w^3+2=0$.
Using Magma \cite{Magma} we obtain that
$S=\{w^{10},w^{20},w^{30},w^{50},w^{60},w^{70}\}$. Note that $\F_q=\{0,w^{40},w^{80}\}$. By Theorem \ref{theorem.LCD.char.odd} we conclude that $C_n(a)$ is LCD for any $a \in \F_q$.
\end{example}

\subsection{Extension of the results for $T(a,b)$}

For $a,b \in \F_q$ and $n \ge 3$, let $\hat{T}_n(a,b)$ be the $n \times n$ {\it triagonal Topelitz} matrix depending on $a$ and $b$ defined as
\be
\hat{T}_n(a,b)=\left[
\begin{array}{cccccc}
a & b & 0 & \cdots & & \\
b & a & b & \cdots & & \\
\vdots & & & & & \\
0 & & & \cdots & b & a
\end{array}
\right].
\nn\ee
Namely, for example, we have
\be
\hat{T}_3(a,b)=\left[
\begin{array}{ccc}
a & b & 0  \\
b & a & b  \\
0 & b & a
\end{array}
\right]
\;\mbox{and} \;
\hat{T}_4(a,b)=\left[
\begin{array}{cccc}
a & b & 0 & 0 \\
b & a & b & 0 \\
0 & b & a & b \\
0 & 0 & b & a
\end{array}
\right].
\nn\ee
We also define the cases for $n=2$ as
\be
\hat{T}_2(a,b)=\left[
\begin{array}{cc}
a & b \\
b & a
\end{array}
\right].
\nn\ee

It is easy to observe that  if $b \neq 0$, then
\be \label{hat.T.relation}
\frac{1}{b} \hat{T}_n(a,b)= T_n(a/b),
\ee
for $n \ge 2$ and $a \in \F_q$.
This observation leads to the following.

\begin{lemma}  \label{lemma4}
For $n \ge 2$, $a,b \in \F_q$ and $b \neq 0$ we have that
$\lambda$ is an eigenvalue of $\hat{T}_n(a,b)$ if and only if
$\lambda/b$ is an eigenvalue of $T_n(a/b)$.
\end{lemma}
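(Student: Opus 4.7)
The plan is to reduce the statement directly to equation (\ref{hat.T.relation}), which identifies $\hat T_n(a,b)$ with $b\cdot T_n(a/b)$. Since $b \neq 0$, the scalar $b$ is invertible, so the reduction is an equivalence and not just an implication.

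First I would translate the eigenvalue condition into the existence of a nonzero vector. That is: $\lambda$ is an eigenvalue of $\hat T_n(a,b)$ iff there exists $v \in \overline{\F}_q^{\,n}\setminus\{0\}$ with $\hat T_n(a,b)\,v = \lambda v$. Using (\ref{hat.T.relation}) this can be rewritten as $b\,T_n(a/b)\,v = \lambda v$, and then dividing both sides by the nonzero scalar $b$ yields $T_n(a/b)\,v = (\lambda/b)\,v$. Since each of these rewrites is an equivalence (the step $b\cdot X = Y \iff X = Y/b$ uses only $b \neq 0$), and since the vector $v$ is unchanged throughout, we conclude that the eigenvalue condition for $\hat T_n(a,b)$ with eigenvalue $\lambda$ is equivalent to the eigenvalue condition for $T_n(a/b)$ with eigenvalue $\lambda/b$.

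Alternatively, and perhaps more cleanly, I would argue at the level of characteristic polynomials:
\begin{equation*}
\det\bigl(\hat T_n(a,b) - \lambda I_n\bigr) = \det\bigl(b\,T_n(a/b) - \lambda I_n\bigr) = b^n \det\bigl(T_n(a/b) - (\lambda/b)\,I_n\bigr),
\end{equation*}
where the last equality factors $b$ out of each of the $n$ rows. Since $b^n \neq 0$, the left-hand side vanishes iff the determinant on the right vanishes, which gives the claimed equivalence.

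There is essentially no obstacle here: the whole content of the lemma is the scaling identity (\ref{hat.T.relation}), and the rest is bookkeeping. The only point requiring a line of care is the hypothesis $b \neq 0$, which is used both to make $T_n(a/b)$ meaningful and to invert the scalar $b$ when passing between the two eigenvalue equations.
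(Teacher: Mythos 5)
Your proposal is correct, and your second (determinant) argument is exactly the paper's proof: the paper writes $\det(\hat T_n(a,b)-\lambda I_n)=0 \iff \det\bigl(\tfrac{1}{b}(\hat T_n(a,b)-\lambda I_n)\bigr)=0 \iff \det(T_n(a/b)-(\lambda/b)I_n)=0$, which is the same scaling of the determinant by the nonzero factor $b^{\pm n}$. Your first, eigenvector-based argument is a harmless variant of the same idea, correctly carried out over $\overline{\F}_q$.
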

\begin{proof}
For $\lambda \in \overline{\F}_q$ we have that
\be
\begin{array}{rcl}
\det\left( \hat{T}_n(a,b) - \lambda I_n\right)=0 & \iff & \det\left( \frac{1}{b}\left(\hat{T}_n(a,b) - \lambda I_n\right) \right)=0 \\
& \iff & \det\left( T_n(a/b) - \lambda/b I_n\right)=0,
\end{array}
\nn\ee
where we use (\ref{hat.T.relation}). This completes the proof.
\end{proof}

Combining Theorem \ref{theorem.LCD.char.even} and Lemma \ref{lemma4} we immediately obtain the following generalization.

\begin{definition} \label{defition.Cna2}
For $a,b \in \F_q$ and an integer $n \ge 2$, let $\hat{C}_n(a,b)$ be the $\F_q$-linear code of length $2n$ and dimension $n$ whose generator polynomial is the $n\times 2n$ matrix given by
\be
\left[ I_n  \; | \; \hat{T}_n(a,b) \right].
\nn\ee
\end{definition}

\begin{theorem} \label{theorem.LCD.char.even2}
For $a, b \in \F_q$ with $b \neq 0$  and an integer $n \ge 2$, consider the $[2n,n]_q$ code $\hat{C}_n(a,b)$ given in Definition \ref{defition.Cna2}. Assume that $\gcd(n+1,q)=1$ and $\Char \F_q$ is even. Then $n$ is even and $\hat{C}_n(a,b)$ is LCD if and only if
\be
a/b \not \in \left\{-1/b + \theta^i + \theta^{-i}: 1 \le i \le \frac{n}{2}\right\},
\nn\ee
where $\theta$ is a primitive $(n+1)$-th root of $1$.
\end{theorem}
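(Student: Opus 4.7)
The plan is to mirror the proof of Theorem \ref{theorem.LCD.char.even}, using Lemma \ref{lemma4} as the bridge between $\hat{T}_n(a,b)$ and $T_n(a/b)$ so that all the spectral analysis done for $T_n(\cdot)$ carries over. First I would note that $\hat{T}_n(a,b)$ is symmetric, so for $G = [I_n \mid \hat{T}_n(a,b)]$ one has $GG^T = I_n + \hat{T}_n(a,b)^2$. Therefore $\hat{C}_n(a,b)$ is LCD iff $-1$ is not an eigenvalue of $\hat{T}_n(a,b)^2$. Since $\Char \F_q$ is even, Lemma \ref{lemma1} reduces this to the condition that $-1$ is not an eigenvalue of $\hat{T}_n(a,b)$ itself. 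The parity remark $n$ even is free: $\gcd(n+1,q)=1$ together with $q=2^s$ forces $n+1$ odd.

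Next I would apply Lemma \ref{lemma4} (valid because $b\neq 0$) to the eigenvalue $\lambda=-1$: the claim becomes that $-1/b$ is not an eigenvalue of $T_n(a/b)$. By the definition of $\phi_n$ and Proposition \ref{proposition1}, this means $\phi_n(-1/b) = E_n\bigl((a/b) - (-1/b)\bigr) = E_n\bigl((a+1)/b\bigr) \neq 0$.

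The last step is to invoke the even-characteristic case of Theorem \ref{theorem1.BZ}:
\be
E_n(x) = \prod_{i=1}^{n/2}\Bigl(x - (\theta^i + \theta^{-i})\Bigr)^2,
\nn\ee
where $\theta$ is a primitive $(n+1)$-th root of $1$. Thus $E_n((a+1)/b) \neq 0$ iff $(a+1)/b \neq \theta^i + \theta^{-i}$ for every $1\le i\le n/2$, which rearranges to $a/b \notin \{-1/b + \theta^i + \theta^{-i}: 1 \le i \le n/2\}$, exactly the stated condition.

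There is no genuine obstacle here: every ingredient (Lemma \ref{lemma1}, Lemma \ref{lemma4}, Proposition \ref{proposition1}, Theorem \ref{theorem1.BZ}) has already been set up. The only thing to be careful about is the bookkeeping with the rescaling factor $b$, in particular making sure the eigenvalue of interest becomes $-1/b$ (not $-1$) after passing from $\hat{T}_n(a,b)$ to $T_n(a/b)$, which is what produces the $-1/b$ shift inside the exclusion set.
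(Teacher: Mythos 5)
Your proposal is correct and follows exactly the paper's route: reduce to the spectral condition via $GG^T=I_n+\hat{T}_n(a,b)^2$ and Lemma \ref{lemma1}, pass from $\hat{T}_n(a,b)$ to $T_n(a/b)$ via Lemma \ref{lemma4} so the eigenvalue of interest becomes $-1/b$, identify $\phi_n(-1/b)=E_n((a+1)/b)$ by Proposition \ref{proposition1}, and conclude with Theorem \ref{theorem1.BZ}. The only difference is that you spell out the steps the paper delegates to "as in the proof of Theorem \ref{theorem.LCD.char.even}."
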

\begin{proof}
Note that $-1$ is an eigenvalue of $\hat{T}_n(a,b)$ iff $-1/b$ is an eigenvalue of $T_n(a/b)$ by Lemma \ref{lemma4}. This holds iff $E_n((a+1)/b)=0$ by Proposition \ref{proposition1}. We complete the proof using Theorem \ref{theorem1.BZ} as in the proof of Theorem \ref{theorem.LCD.char.even}.
\end{proof}

\begin{theorem} \label{theorem.LCD.char.odd2}
For $a, b \in \F_q$ with $b \neq 0$  and an integer $n \ge 2$, consider the $[2n,n]_q$ code $\hat{C}_n(a,b)$ given in Definition \ref{defition.Cna2}.
Assume that $\gcd(n+1,q)=1$ and $\Char \F_q$ is odd. Then
$\hat{C}_n(a,b)$ is LCD if and only if
\be
a/b \not \in \left\{-\mu/b + \theta^i + \theta^{-i}: 1 \le i \le n\right\} \cup
\left\{\mu/b + \theta^i + \theta^{-i}: 1 \le i \le n\right\},
\nn\ee
where $\mu^2=-1$ and $\theta$ is a primitive $2(n+1)$-th root of $1$.
\end{theorem}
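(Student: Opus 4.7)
The plan is to run the same chain of equivalences used in Theorem \ref{theorem.LCD.char.odd}, but inserting Lemma \ref{lemma4} to convert eigenvalue questions about $\hat{T}_n(a,b)$ into eigenvalue questions about $T_n(a/b)$, where the Dickson polynomial machinery already applies. First I would note that $\hat{T}_n(a,b)$ is symmetric, so the usual computation gives $GG^T = I_n + \hat{T}_n(a,b)^2$, and therefore $\hat{C}_n(a,b)$ is LCD if and only if $-1$ is not an eigenvalue of $\hat{T}_n(a,b)^2$.

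Next, since $\Char \F_q$ is odd, Lemma \ref{lemma1} tells us that $-1$ is an eigenvalue of $\hat{T}_n(a,b)^2$ if and only if either $\mu$ or $-\mu$ is an eigenvalue of $\hat{T}_n(a,b)$, where $\mu \in \F_{q^2}$ satisfies $\mu^2 = -1$. Applying Lemma \ref{lemma4} (with $\lambda = \pm \mu$), this happens exactly when $\mu/b$ or $-\mu/b$ is an eigenvalue of $T_n(a/b)$. By the definition of $\phi_n$ and Proposition \ref{proposition1}, this is equivalent to $E_n(a/b + \mu/b) = 0$ or $E_n(a/b - \mu/b) = 0$.

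Finally, the hypothesis $\gcd(n+1,q)=1$ lets me invoke Theorem \ref{theorem1.BZ} in the odd-characteristic case, which factors $E_n(x) = \prod_{i=1}^{n} (x - (\theta^i + \theta^{-i}))$ with $\theta$ a primitive $2(n+1)$-th root of $1$. Thus $E_n(a/b \pm \mu/b) = 0$ precisely when $a/b \pm \mu/b \in \{\theta^i + \theta^{-i} : 1 \le i \le n\}$, equivalently $a/b \in \{\mp \mu/b + \theta^i + \theta^{-i} : 1 \le i \le n\}$. Taking the complement and uniting over the two sign choices gives the claimed characterization.

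There is no real obstacle: every step is formally identical to the arguments already carried out for Theorems \ref{theorem.LCD.char.even2} and \ref{theorem.LCD.char.odd}, and the only place care is needed is to make sure one tracks the sign correctly when eliminating $\pm\mu$ in the shifted argument of $E_n$, so that the final sets in the union carry $\mp\mu/b$ on each branch and combine into the symmetric union displayed in the statement.
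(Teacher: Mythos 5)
Your proposal is correct and follows essentially the same route as the paper: reduce to $-1$ not being an eigenvalue of $\hat{T}_n(a,b)^2$, pass to $\pm\mu$ via Lemma \ref{lemma1}, rescale to $T_n(a/b)$ via Lemma \ref{lemma4}, translate to $E_n((a\pm\mu)/b)=0$ via Proposition \ref{proposition1}, and conclude with the factorization in Theorem \ref{theorem1.BZ}. The sign bookkeeping you flag is handled correctly, and in any case the final union over both signs is symmetric, so nothing is lost.
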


\begin{proof}
The proof is similar to those of Theorem \ref{theorem.LCD.char.even2} and Theorem \ref{theorem.LCD.char.odd}.
Note that $-\mu$ or $\mu$ is an eigenvalue of $\hat{T}_n(a,b)$ iff $-\mu/b$ or $\mu/b$ is an eigenvalue of $T_n(a/b)$ by Lemma \ref{lemma4}.
 This holds iff $E_n((a+\mu)/b)=0$ or $E_n((a-\mu)/b)=0$ by Proposition \ref{proposition1}. We complete the proof using Theorem \ref{theorem1.BZ} as in the proof of Theorem \ref{theorem.LCD.char.odd}.
\end{proof}

\subsection{Extension of the results for $\gcd(n+1,q) \neq 1$}

 First we introduce further notation: For positive integers $a,b$ and a nonnegative integer $u$, $b^u \mid\mid a$ denotes that $b^u \mid $a and $b^{u+1} \nmid a$. In the following theorem we recall the result in \cite[Theorem 4]{BZ} for  the arbitrary case including $\gcd(n+1,q) \neq 1$.

\begin{theorem} \label{theorem2.BZ}
Let $n \ge 1$ be an integer.
For the characteristic $p$ of $\F_q$, let $r$ be the nonnegative integer such that $p^r \mid\mid (n+1)$. Let $m$ be the nonnegative integer such that $n+1=p^r(m+1)$. We have the following:
\begin{itemize}
\item \underline{$p$ is odd:}
\be
E_n(x)=E_m(x)^{p^r}(x-2)^{(p^r-1)/2}(x+2)^{(p^r+1)/2} \;\; \mbox{and} \;\;
E_m(x)=\prod_{i=1}^m \left( x - \left( \theta^i + \theta^{-i}\right)\right),
\nn\ee
where $\theta$ is a primitive $2(m+1)$-th root of $1$.

\item \underline{  $p=2$:} then $m$ is even and
\be
E_n(x)=E_m(x)^{2^r}x^{2^r-1} \;\; \mbox{and} \;\;
E_m(x)=\prod_{i=1}^{m/2} \left( x - \left( \theta^i + \theta^{-i}\right)\right)^2,
\nn\ee
where $\theta$ is a primitive $(m+1)$-th root of $1$.
\end{itemize}
\end{theorem}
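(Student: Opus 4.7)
The strategy is to reduce the general case to the coprime case of Theorem \ref{theorem1.BZ} by separating out the ``Frobenius'' part from the $p$-free part. The starting point is the closed-form identity
\[
E_n(y + y^{-1})\,(y - y^{-1}) \;=\; y^{n+1} - y^{-(n+1)},
\]
valid in $\overline{\F}_q(y)$. I would establish this by induction on $n$: the cases $n=0,1$ are immediate from $E_0=1$ and $E_1(x)=x$, and the inductive step is a direct calculation using the recurrence $E_n(x) = x E_{n-1}(x) - E_{n-2}(x)$ together with the telescoping identity $(y+y^{-1})(y^n-y^{-n}) - (y^{n-1}-y^{-(n-1)}) = y^{n+1}-y^{-(n+1)}$.

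Given this, write $n+1 = p^r(m+1)$ with $\gcd(m+1,p)=1$. In characteristic $p$ the Frobenius $z\mapsto z^{p^r}$ is additive, so
\[
y^{n+1} - y^{-(n+1)} \;=\; \bigl(y^{m+1} - y^{-(m+1)}\bigr)^{p^r} \;=\; (y-y^{-1})^{p^r}\,E_m(y+y^{-1})^{p^r}.
\]
Dividing by $y-y^{-1}$ and setting $x = y + y^{-1}$ gives
\[
E_n(x) \;=\; E_m(x)^{p^r}\,(y - y^{-1})^{p^r - 1}.
\]
It now remains to rewrite $(y-y^{-1})^{p^r-1}$ as a polynomial in $x$ and to invoke Theorem \ref{theorem1.BZ} for $E_m(x)$, which is legitimate because $\gcd(m+1,q)=1$ by construction.

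In odd characteristic, $(y-y^{-1})^2 = (y+y^{-1})^2 - 4 = x^2-4 = (x-2)(x+2)$, and since $p^r-1$ is even,
\[
(y-y^{-1})^{p^r-1} \;=\; (x-2)^{(p^r-1)/2}(x+2)^{(p^r-1)/2};
\]
the product form of $E_m(x)$ then follows from Theorem \ref{theorem1.BZ} with $\theta$ a primitive $2(m+1)$-th root of unity. In characteristic $2$ the hypothesis $\gcd(m+1,2)=1$ forces $m$ to be even, and the collapse $y-y^{-1} = y+y^{-1} = x$ gives $(y-y^{-1})^{2^r-1} = x^{2^r-1}$; the product form of $E_m(x)$ then comes from the even-characteristic case of Theorem \ref{theorem1.BZ} with $\theta$ a primitive $(m+1)$-th root of unity. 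The one delicate point in the whole argument is the degree bookkeeping at the end: one should verify that the total degree of the right-hand side agrees with $\deg E_n = n$, since any parity or sign slip in the exponents on the $(x\pm 2)$ or $x$ factors is the most natural place for the argument to come out off by one.
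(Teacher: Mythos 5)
Your proof is correct, and it does more than the paper does: the paper states this result without proof, simply citing \cite[Theorem 4]{BZ}, whereas you give a complete, self-contained derivation from the functional equation $E_n(y+y^{-1})(y-y^{-1})=y^{n+1}-y^{-(n+1)}$ together with the additivity of the Frobenius $z\mapsto z^{p^r}$; every step (the induction for the functional equation, the reduction $y^{n+1}-y^{-(n+1)}=\bigl(y^{m+1}-y^{-(m+1)}\bigr)^{p^r}$, the conversion of $(y-y^{-1})^{p^r-1}$ into a polynomial in $x$ via $(y-y^{-1})^2=x^2-4$, and the appeal to Theorem \ref{theorem1.BZ} for $E_m$, legitimate since $\gcd(m+1,q)=1$) checks out. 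Moreover, the ``degree bookkeeping'' you flag at the end is not merely a sanity check: it catches a typo in the paper's statement. Your argument yields $E_n(x)=E_m(x)^{p^r}(x-2)^{(p^r-1)/2}(x+2)^{(p^r-1)/2}$ in odd characteristic, of total degree $p^rm+(p^r-1)=n$ as required, whereas the printed exponent $(p^r+1)/2$ on $(x+2)$ would give degree $n+1$ and is wrong: for $q=3$, $n=2$ one has $r=1$, $m=0$ and $E_2(x)=x^2-1=(x-2)(x+2)$ over $\F_3$, not $(x-2)(x+2)^2$. Your version is the one in \cite{BZ}. The slip is harmless downstream, since for $r\ge 1$ and $p$ odd both $2$ and $-2$ are roots of $E_n$ with either exponent, which is all that Theorem \ref{theorem.LCD.char.odd.extension} and its corollaries use. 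The only minor point to make explicit is that the appeal to Theorem \ref{theorem1.BZ} for $E_m$ presupposes $m\ge 1$; when $m=0$ the product is empty and $E_0=1$, so the formula still holds.
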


We extend Theorem \ref{theorem.LCD.char.even2} (and hence Theorem \ref{theorem.LCD.char.even}) for $\gcd(n+1,q) \neq 1$.

\begin{theorem} \label{theorem.LCD.char.even.extension}
For $a, b \in \F_q$ with $b \neq 0$  and an integer $n \ge 2$, consider the $[2n,n]_q$ code $\hat{C}_n(a,b)$ given in Definition \ref{defition.Cna2}. Assume that $\Char \F_q$ is even.
Let $r$ be the nonnegative integer such that $2^r \mid\mid (n+1)$. Let $m$ be the nonnegative integer such that $n+1=2^r(m+1)$. Assume that $r \ge 1$ (see Theorem \ref{theorem.LCD.char.even2} for the remaining case that $r=0$). We have that $m$ is even. If $m>0$, then
$\hat{C}_n(a,b)$ is LCD if and only if
\be
a/b \not \in \{-1/b\} \cup \left\{-1/b + \theta^i + \theta^{-i}: 1 \le i \le \frac{m}{2}\right\},
\nn\ee
where $\theta$ is a primitive $(m+1)$-th root of $1$. If $m=0$, then
$\hat{C}_n(a,b)$ is LCD if and only if
$a\neq 1$.
\end{theorem}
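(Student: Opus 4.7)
The plan is to mirror the proof of Theorem \ref{theorem.LCD.char.even2}, but to invoke the refined factorization of Theorem \ref{theorem2.BZ} in place of Theorem \ref{theorem1.BZ}. First I would rerun the standard reduction for $G = [I_n \mid \hat{T}_n(a,b)]$: symmetry of $\hat{T}_n(a,b)$ gives $GG^T = I_n + \hat{T}_n(a,b)^2$, so $\hat{C}_n(a,b)$ is LCD iff $-1$ is not an eigenvalue of $\hat{T}_n(a,b)^2$. Since $\Char \F_q$ is even, Lemma \ref{lemma1} collapses this to $-1$ not being an eigenvalue of $\hat{T}_n(a,b)$, which by Lemma \ref{lemma4} is equivalent to $-1/b$ not being an eigenvalue of $T_n(a/b)$, and Proposition \ref{proposition1} rewrites that as the numerical condition $E_n((a+1)/b) \neq 0$.

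Next I apply the even-characteristic case of Theorem \ref{theorem2.BZ}. The assumption $2^r \mid\mid (n+1)$ forces $m+1 = (n+1)/2^r$ to be odd, so $m$ is indeed even, as asserted. The factorization supplied by Theorem \ref{theorem2.BZ} reads
\[
E_n(x) = E_m(x)^{2^r}\, x^{2^r - 1},
\]
and since $r \ge 1$ the exponent $2^r - 1$ is at least one, so $x = 0$ is a genuine root of $E_n$. This root contributes $(a+1)/b = 0$, equivalently $a = 1$ (using $-1 = 1$ in characteristic two), as an obstruction to LCD regardless of the size of $m$; together with the roots of $E_m$ it exhausts the zero set of $E_n$.

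For $m > 0$ I would substitute $E_m(x) = \prod_{i=1}^{m/2}(x - (\theta^i + \theta^{-i}))^2$ from Theorem \ref{theorem2.BZ}, with $\theta$ a primitive $(m+1)$-th root of unity, so the zero set of $E_n$ is exactly $\{0\} \cup \{\theta^i + \theta^{-i} : 1 \le i \le m/2\}$; pulling this back through $x = (a+1)/b$ produces the advertised forbidden set $\{-1/b\} \cup \{-1/b + \theta^i + \theta^{-i}\}$ for $a/b$. For $m = 0$ the polynomial $E_0$ is identically $1$, so $E_m$ contributes no further roots and the sole obstruction is $(a+1)/b = 0$, i.e.\ $a = 1$. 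I expect no serious obstacle here; the only subtlety is keeping the characteristic-two bookkeeping consistent, namely identifying $-1/b$ with $1/b$ and $a + 1 = 0$ with $a = 1$ so the conclusion reads uniformly in the form stated, and isolating the $m = 0$ edge case where the $E_m$ factor degenerates.
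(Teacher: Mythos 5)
Your proposal is correct and follows exactly the route the paper takes: the paper's proof of Theorem \ref{theorem.LCD.char.even.extension} is simply the reduction chain of Theorem \ref{theorem.LCD.char.even2} (via $GG^T=I_n+\hat{T}_n(a,b)^2$, Lemma \ref{lemma1}, Lemma \ref{lemma4} and Proposition \ref{proposition1} to the condition $E_n((a+1)/b)\neq 0$) with Theorem \ref{theorem2.BZ} substituted for Theorem \ref{theorem1.BZ}, which is precisely what you do. Your bookkeeping of the extra root $x=0$ coming from the factor $x^{2^r-1}$ (giving the forbidden value $a/b=-1/b$, i.e.\ $a=1$), the parity of $m$, and the degenerate $m=0$ case is all accurate.
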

\begin{proof}
The proof is similar to that of Theorem \ref{theorem.LCD.char.even2}. The main difference is that we use
Theorem \ref{theorem2.BZ} instead of Theorem \ref{theorem1.BZ}.
\end{proof}

Now we extend Theorem \ref{theorem.LCD.char.odd2} (and hence Theorem \ref{theorem.LCD.char.odd}) for $\gcd(n+1,q) \neq 1$.

\begin{theorem} \label{theorem.LCD.char.odd.extension}
For $a, b \in \F_q$ with $b \neq 0$  and an integer $n \ge 2$, consider the $[2n,n]_q$ code $\hat{C}_n(a,b)$ given in Definition \ref{defition.Cna2}. Assume that $\Char \F_q$ is odd, which is $p$.
Let $r$ be the nonnegative integer such that $p^r \mid\mid (n+1)$. Let $m$ be the nonnegative integer such that $n+1=p^r(m+1)$. Assume that $r \ge 1$ (see Theorem \ref{theorem.LCD.char.odd2} for the remaining case that $r=0$).  If $m>0$, then
$\hat{C}_n(a,b)$ is LCD if and only if
\be
a/b & \not \in
\begin{array}{l}
\left\{-\mu/b+2, -\mu/b-2, \mu/b+2, \mu/b-2 \right\}
\cup
\left\{-\mu/b + \theta^i + \theta^{-i}: 1 \le i \le m\right\}
\\ \cup
\left\{\mu/b + \theta^i + \theta^{-i}: 1 \le i \le m\right\},
\end{array}
\nn\ee
where $\mu^2=-1$ and $\theta$ is a primitive $2(m+1)$-th root of $1$. If $m=0$, then
$\hat{C}_n(a,b)$ is LCD if and only if $a/b \not \in \{-\mu/b +2, -\mu/b -2, \mu/b +2, \mu/b-2\}$.
\end{theorem}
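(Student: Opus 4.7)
The plan is to mirror the proofs of Theorems \ref{theorem.LCD.char.even2}, \ref{theorem.LCD.char.odd2}, and \ref{theorem.LCD.char.even.extension}, with the only real change being that we invoke the general factorization in Theorem \ref{theorem2.BZ} instead of Theorem \ref{theorem1.BZ}. Concretely, since $\hat{T}_n(a,b)$ is symmetric, the generator matrix $G=[I_n \mid \hat{T}_n(a,b)]$ satisfies $GG^T=I_n+\hat{T}_n(a,b)^2$, so $\hat{C}_n(a,b)$ is LCD if and only if $-1$ is not an eigenvalue of $\hat{T}_n(a,b)^2$. In odd characteristic Lemma \ref{lemma1} converts this into the condition that neither $\mu$ nor $-\mu$ is an eigenvalue of $\hat{T}_n(a,b)$, and Lemma \ref{lemma4} then translates this to: neither $\mu/b$ nor $-\mu/b$ is an eigenvalue of $T_n(a/b)$. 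By Proposition \ref{proposition1}, this is equivalent to
\[
E_n\!\left(\tfrac{a+\mu}{b}\right)\neq 0 \quad \text{and} \quad E_n\!\left(\tfrac{a-\mu}{b}\right)\neq 0.
\]

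Next I would read off the zero set of $E_n(x)$ from Theorem \ref{theorem2.BZ}. Since $p$ is odd and $r\ge 1$,
\[
E_n(x)=E_m(x)^{p^r}(x-2)^{(p^r-1)/2}(x+2)^{(p^r+1)/2},
\]
and the hypothesis $r\ge 1$ (together with $p\ge 3$) ensures $(p^r-1)/2\ge 1$ and $(p^r+1)/2\ge 1$, so $\pm 2$ are both genuine roots of $E_n$. When $m\ge 1$, Theorem \ref{theorem2.BZ} further gives the roots $\theta^i+\theta^{-i}$ for $1\le i\le m$ of $E_m$, with $\theta$ a primitive $2(m+1)$-th root of unity; hence the full set of roots of $E_n$ is
\[
\{\,2,\,-2\,\}\;\cup\;\{\,\theta^i+\theta^{-i} : 1\le i\le m\,\}.
\]
When $m=0$ the factor $E_m(x)^{p^r}$ is the empty product equal to $1$, and the root set collapses to $\{2,-2\}$.

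Translating the two non-vanishing conditions on $E_n$ at $(a\pm\mu)/b$ into the statement that $a/b$ lies outside the appropriate shifted sets, and taking the union over the two sign choices, yields exactly the exclusion set described in the theorem in the case $m>0$, and the four-element set $\{-\mu/b\pm 2,\,\mu/b\pm 2\}$ in the case $m=0$. I do not expect any serious obstacle: the proof is a routine assembly of Lemmas \ref{lemma1} and \ref{lemma4}, Proposition \ref{proposition1}, and Theorem \ref{theorem2.BZ}, with the only point requiring a moment of care being to verify that $\pm 2$ always appear as roots under the hypothesis $r\ge 1$ (so that the four extra points $\pm\mu/b\pm 2$ are genuinely excluded), and to handle the $m=0$ bookkeeping by interpreting the empty product correctly.
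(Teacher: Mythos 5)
Your proposal is correct and follows exactly the route the paper intends: the chain $GG^T=I_n+\hat{T}_n(a,b)^2$, Lemma \ref{lemma1}, Lemma \ref{lemma4}, Proposition \ref{proposition1}, and then the root set of $E_n$ read off from Theorem \ref{theorem2.BZ}, which is all the paper's own (one-line) proof says to do. Your added check that $(p^r-1)/2\ge 1$ and $(p^r+1)/2\ge 1$ for odd $p$ and $r\ge 1$, so that $\pm 2$ are genuine roots and the four points $\pm\mu/b\pm 2$ are truly excluded, is a worthwhile detail the paper leaves implicit.
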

\begin{proof}
The proof is similar to that of Theorem \ref{theorem.LCD.char.odd2}. The main difference is that we use
Theorem \ref{theorem2.BZ} instead of Theorem \ref{theorem1.BZ}.
\end{proof}

Next we extend Corollaries \ref{corollary.LCD.char.even}, \ref{corollary.LCD.char.odd}, \ref{cor.arith.existence.q.even} and \ref{cor.arith.existence.q.odd}.

\begin{corollary} \label{corollary.LCD.char.even.extension}
Let $b \in \F_q$ with $b \neq 0$  and $n \ge 2$ an integer.
Assume that $\Char \F_q$ is even.
Let $r$ be the nonnegative integer such that $2^r \mid\mid (n+1)$. Let $m$ be the nonnegative integer such that $n+1=2^r(m+1)$. Assume that $r \ge 1$ (see Corollary \ref{corollary.LCD.char.even} for the remaining case that $r=0$).  If $q>m/2+1$, then there exists $a \in \F_q$ such that
$\hat{C}_n(a,b)$ is LCD.
\end{corollary}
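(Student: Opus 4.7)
The plan is to deduce this existence statement directly from Theorem \ref{theorem.LCD.char.even.extension} via a cardinality argument, in exact parallel with the proof of Corollary \ref{corollary.LCD.char.even}.

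First, I would fix $b \in \F_q^*$ and observe that since $b \neq 0$, the map $\F_q \to \F_q$, $a \mapsto a/b$, is a bijection. Hence, producing an $a \in \F_q$ with $\hat{C}_n(a,b)$ LCD is equivalent to producing a $c \in \F_q$ (namely $c = a/b$) that avoids the exceptional set described in Theorem \ref{theorem.LCD.char.even.extension}.

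Next I would bound the size of this exceptional set in the two cases given by the theorem. If $m > 0$, the exceptional set is
\[
S \;=\; \{-1/b\} \cup \left\{-1/b + \theta^i + \theta^{-i} : 1 \le i \le m/2\right\},
\]
so $|S| \le m/2 + 1$, and a fortiori $|S \cap \F_q| \le m/2 + 1$. If $m = 0$, Theorem \ref{theorem.LCD.char.even.extension} excludes only the single value $a = 1$, so the corresponding set in $\F_q$ has size at most $1 = m/2 + 1$.

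Finally, under the hypothesis $q > m/2 + 1$, the set $\F_q$ strictly contains the exceptional set in both cases, so we may pick $c \in \F_q$ outside it and set $a = bc$. Theorem \ref{theorem.LCD.char.even.extension} then asserts that $\hat{C}_n(a,b)$ is LCD, completing the proof. There is no serious obstacle here; the only minor point is to handle the degenerate case $m = 0$ (where the set $\{-1/b + \theta^i + \theta^{-i}\}$ is empty and the characterization collapses to $a \neq 1$) in parallel with the generic case $m > 0$, and to observe that the bijection $a \mapsto a/b$ preserves cardinalities, so the same counting bound applies regardless of $b$.
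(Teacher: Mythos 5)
Your proposal is correct and follows essentially the same route as the paper: the paper's proof simply refers back to the counting argument of Corollary \ref{corollary.LCD.char.even}, noting that the hypothesis becomes $q>m/2+1$ rather than $q>m/2$ because $x=0$ is an extra root of $E_n(x)$ when $r\ge 1$ (your extra element $-1/b$ in the exceptional set). Your write-up just spells out the details (the bijection $a\mapsto a/b$ and the degenerate case $m=0$) that the paper leaves implicit.
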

\begin{proof}
The proof is similar to that of Corollary \ref{corollary.LCD.char.even}. We have $q>m/2+1$ instead of $q > m/2$ in the hypothesis as $x=0$ is a root of $E_n(x)$ for $r \ge 1$.
\end{proof}

\begin{corollary} \label{corollary.LCD.char.odd.extension}
Let $b \in \F_q$ with $b \neq 0$  and $n \ge 2$ an integer.
Assume that $\Char \F_q$ is odd.
Let $r$ be the nonnegative integer such that $2^r \mid\mid (n+1)$. Let $m$ be the nonnegative integer such that $n+1=2^r(m+1)$. Assume that $r \ge 1$ (see Corollary \ref{corollary.LCD.char.odd} for the remaining case that $r=0$).  If $q>2m+4$, then there exists $a \in \F_q$ such that
$\hat{C}_n(a,b)$ is LCD.
\end{corollary}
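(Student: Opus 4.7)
The plan is to mirror the proof of Corollary \ref{corollary.LCD.char.odd}, substituting Theorem \ref{theorem.LCD.char.odd.extension} for Theorem \ref{theorem.LCD.char.odd}. According to Theorem \ref{theorem.LCD.char.odd.extension}, when $m \ge 1$ the code $\hat{C}_n(a,b)$ fails to be LCD precisely for those $a \in \F_q$ with $a/b$ lying in the (at most) $(2m+4)$-element set
\[
S' = \{\pm \mu/b \pm 2\} \cup \{\pm\mu/b + \theta^i + \theta^{-i} : 1 \le i \le m\},
\]
where $\mu \in \F_{q^2}$ satisfies $\mu^2=-1$ and $\theta \in \overline{\F}_q$ is a primitive $2(m+1)$-th root of $1$; when $m=0$ the corresponding exceptional set has at most $4$ elements.

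Multiplying by $b \neq 0$, which is a bijection of $\overline{\F}_q$, the set of ``bad'' values of $a$ is $S = bS'$, and in either case $|S| \le 2m+4$. A fortiori $|S \cap \F_q| \le 2m+4$. Under the hypothesis $q > 2m+4$ there exists $a \in \F_q \setminus S$, and for this $a$, Theorem \ref{theorem.LCD.char.odd.extension} guarantees that $\hat{C}_n(a,b)$ is LCD.

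There is no real obstacle here; the only conceptual point is why the counting bound is $2m+4$ rather than the $2n$ appearing in Corollary \ref{corollary.LCD.char.odd}. The $2m$ contribution comes from the two shifted conjugate-pair sets indexed by $i = 1,\dots,m$, exactly as in the coprime case, while the additional $4$ accounts for the extra roots $\pm 2$ of $E_n(x)$ introduced by the factors $(x-2)^{(p^r-1)/2}(x+2)^{(p^r+1)/2}$ in Theorem \ref{theorem2.BZ} when $r \ge 1$; these contribute the four values $\pm\mu/b \pm 2$ to $S'$.
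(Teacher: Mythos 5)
Your proof is correct and follows exactly the route the paper intends: count the at most $2m+4$ excluded values of $a$ coming from Theorem \ref{theorem.LCD.char.odd.extension} (the $2m$ shifted conjugate pairs plus the four values $\pm\mu/b\pm 2$ contributed by the factors $x\mp 2$ in Theorem \ref{theorem2.BZ}), and use $q>2m+4$ to find a good $a$. The paper's own proof is just a one-line reference to the argument of Corollary \ref{corollary.LCD.char.odd} with the same explanation for the extra $4$, so your write-up is simply a fuller version of the same argument.
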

\begin{proof}
The proof is similar to that of Corollary \ref{corollary.LCD.char.odd}. We have $q>2m+4$ instead of $q > 2m$ in the hypothesis as $x=2$ and $x=-2$ are roots of $E_n(x)$ for $r \ge 1$.
\end{proof}

\begin{corollary} \label{cor.arith.existence.q.even.extension}
Let $b \in \F_q$ with $b \neq 0$  and $n \ge 2$ an integer.
Assume that $\Char \F_q$ is even.
Let $r$ be the nonnegative integer such that $2^r \mid\mid (n+1)$. Let $m$ be the nonnegative integer such that $n+1=2^r(m+1)$. Assume that $r \ge 1$ (see Corollary \ref{cor.arith.existence.q.even} for the remaining case that $r=0$).
Furthermore, assume that $\gcd(m+1,q^2-1)=1$. We have that
$\hat{C}_n(a,b)$ is LCD for all $a \in \F_q \setminus \{1\}$.
\end{corollary}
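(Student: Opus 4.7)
The plan is to mirror the argument of Corollary \ref{cor.arith.existence.q.even}, now appealing to Theorem \ref{theorem.LCD.char.even.extension} instead of Theorem \ref{theorem.LCD.char.even2}. By that theorem, when $m > 0$, $\hat{C}_n(a,b)$ is LCD iff $a/b$ avoids the set $\{-1/b\} \cup \{-1/b + \theta^i + \theta^{-i} : 1 \le i \le m/2\}$, where $\theta$ is a primitive $(m+1)$-th root of $1$. I would first dispose of the singleton $\{-1/b\}$: in characteristic $2$ we have $a/b = -1/b$ iff $a = -1 = 1$, which is precisely the element excluded by the hypothesis $a \in \F_q \setminus \{1\}$.

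The remaining task is to show that for every $1 \le i \le m/2$ the value $\theta^i + \theta^{-i}$ does not lie in $\F_q$; since $(a+1)/b$ is an element of $\F_q$, this will force $a/b \neq -1/b + \theta^i + \theta^{-i}$. Following the template of Corollary \ref{cor.arith.existence.q.even}, I would set $t = \theta^i$, $u = t + 1/t$, and assume for contradiction that $u \in \F_q$, i.e.\ $u^q = u$. Expanding in characteristic $2$ gives $t^q + t^{-q} = t + t^{-1}$; multiplying through by $t^q$ and factoring yields $(t^{q-1} - 1)(t^{q+1} - 1) = 0$, so $t \in \F_{q^2}$. The key step is then the arithmetic squeeze: $t^{m+1} = 1$ combined with $t^{q^2 - 1} = 1$ forces the order of $t$ to divide $\gcd(m+1, q^2 - 1) = 1$, hence $t = 1$. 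But $1 \le i \le m/2 < m+1$ (recall from Theorem \ref{theorem.LCD.char.even.extension} that $m$ is even since $2^r \| (n+1)$), so $t = \theta^i \neq 1$, a contradiction.

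The edge case $m = 0$ is even cleaner: the set $\{-1/b + \theta^i + \theta^{-i} : 1 \le i \le m/2\}$ is empty and $\gcd(m+1, q^2 - 1) = 1$ holds automatically, so Theorem \ref{theorem.LCD.char.even.extension} directly yields LCD-ness for all $a \neq 1$. The only real content of the proof is the factorization of $u^q - u$ after clearing denominators, which is routine; the hypothesis $\gcd(m+1, q^2-1) = 1$ is then applied exactly as in Corollary \ref{cor.arith.existence.q.even}, so I do not expect any genuine obstacle beyond bookkeeping.
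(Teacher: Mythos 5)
Your proposal is correct and follows essentially the same route as the paper, which simply adapts the argument of Corollary \ref{cor.arith.existence.q.even} (showing $\theta^i+\theta^{-i}\notin\F_q$ from $t\in\F_{q^2}$, $t^{m+1}=1$ and $\gcd(m+1,q^2-1)=1$) and observes that the extra forbidden value $-1/b$ corresponds exactly to the excluded element $a=1$ via Theorem \ref{theorem.LCD.char.even.extension}. Your derivation of $t\in\F_{q^2}$ through the factorization $(t^{q-1}-1)(t^{q+1}-1)=0$ is just a slightly more explicit version of the paper's step, so there is no substantive difference.
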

\begin{proof}
The proof is similar to that of Corollary \ref{cor.arith.existence.q.even}. We have $\hat{C}_n(a,b)$ is LCD for all $a \in \F_q \setminus \{1\}$ instead of  for all $a \in \F_q $ in the conclusion as $\hat{C}_n({1,b})$ is not LCD by Theorem \ref{theorem.LCD.char.even.extension} when $r \ge 1$.
\end{proof}
In the following corollary there are no differences in the conclusion for the cases $q \equiv 1 \mod 4$ and $q \equiv 3 \mod 4$, which is not the situation in Corollary \ref{cor.arith.existence.q.odd}.
\begin{corollary} \label{cor.arith.existence.q.odd.extension}
Let $b \in \F_q$ with $b \neq 0$  and $n \ge 2$ an integer.
Assume that $\Char \F_q$ is odd, which is $p$.
Let $r$ be the nonnegative integer such that $p^r \mid\mid (n+1)$. Let $m$ be the nonnegative integer such that $n+1=p^r(m+1)$.  Assume that $r \ge 1$ (see Corollary \ref{cor.arith.existence.q.odd} for the remaining case that $r=0$).
Furthermore, assume that
\begin{itemize}
\item If $q \equiv 1 \mod 4$, then $\gcd(m+1, (q^2-1)/2)=1$. \\
\item If $q \equiv 3 \mod 4$, then $\gcd(m+1,(q^4-1)/2)$ divides $(q-1)/2$.
\end{itemize}
Let $\mu \in \F_{q^2}$ such that $\mu^2=-1$. We have that
$\hat{C}_n(a,b)$ is LCD for all $a \in \F_q \setminus \{\mu+2b,\mu-2b,-\mu+2b,\mu-2b\}$.
\end{corollary}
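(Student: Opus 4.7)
The plan is to mirror the structure of Corollary \ref{cor.arith.existence.q.odd}, now invoking Theorem \ref{theorem.LCD.char.odd.extension} in place of Theorem \ref{theorem.LCD.char.odd} so that $m$ plays the role that $n$ did there. Since $r\ge 1$, that theorem enumerates the $a/b$ for which $\hat{C}_n(a,b)$ fails to be LCD as $\{\pm\mu/b\pm 2\}\cup\{\pm\mu/b+\theta^i+\theta^{-i}:1\le i\le m\}$, where $\theta$ is a primitive $2(m+1)$-th root of unity. After multiplying through by $b$, the first four values recover the announced exceptional set, so the only real content is to prove that, under the arithmetic hypothesis on $m+1$, the second set contributes no element of $\F_q$.

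For this, I would fix $1\le i\le m$, set $t=\theta^i$, and note that $t\notin\{1,-1\}$ because $\theta$ has order $2(m+1)$ and $1\le i\le m<m+1$. Assume for contradiction that $w=\pm\mu+b(t+t^{-1})\in\F_q$; solving for $s:=t+t^{-1}=(w\mp\mu)/b$ then shows that $t$ satisfies the monic quadratic $x^2-sx+1$ over whichever of $\F_q$ or $\F_{q^2}$ the element $s$ lies in. The case split mirrors Corollary \ref{cor.arith.existence.q.odd}. When $q\equiv 1\pmod 4$, $\mu\in\F_q$ forces $s\in\F_q$ and hence $t\in\F_{q^2}$, so $t^{\gcd(2(m+1),q^2-1)}=1$; the hypothesis $\gcd(m+1,(q^2-1)/2)=1$ pins this gcd down to $2$, giving $t^2=1$, which contradicts $t\notin\{\pm 1\}$. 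When $q\equiv 3\pmod 4$, $\mu\in\F_{q^2}\setminus\F_q$ forces $s\in\F_{q^2}$ and $t\in\F_{q^4}$, so $t^{\gcd(2(m+1),q^4-1)}=1$, and the hypothesis $\gcd(m+1,(q^4-1)/2)\mid(q-1)/2$ makes this exponent divide $q-1$, whence $t\in\F_q$ and $bs\in\F_q$; but then $\pm\mu=w-bs\in\F_q$, a contradiction.

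The only step that requires genuine care is converting each gcd hypothesis on $m+1$ into the sharp divisibility bound on $\gcd(2(m+1),q^2-1)$ or $\gcd(2(m+1),q^4-1)$; this is a short factor-of-two bookkeeping argument that uses the parity of $q^2-1$ and $q^4-1$ together with the definitional equality $q^{2k}-1=2\cdot(q^{2k}-1)/2$. Beyond that, the parameter $b$ plays only the homogenizing role already installed by Lemma \ref{lemma4} and Theorem \ref{theorem.LCD.char.odd.extension}, so no structural idea beyond those used in Corollary \ref{cor.arith.existence.q.odd} is needed.
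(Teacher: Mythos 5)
Your proof is correct and takes essentially the same route as the paper, whose own proof simply defers to Corollary \ref{cor.arith.existence.q.odd} with Theorem \ref{theorem.LCD.char.odd.extension} in place of Theorem \ref{theorem.LCD.char.odd}; your write-up just makes the factor-of-two gcd bookkeeping and the $t\notin\{\pm1\}$ observation explicit. One incidental remark: the exceptional set as printed, $\{\mu+2b,\mu-2b,-\mu+2b,\mu-2b\}$, repeats $\mu-2b$ and omits $-\mu-2b$ (a typo carried through the paper's statement and proof), and your set $\{\pm\mu\pm 2b\}$ is the correct one.
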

\begin{proof}
The proof is similar to the proof of Corollary \ref{cor.arith.existence.q.odd}. We have $\hat{C}_n(a,b)$ is not LCD
for $a \in \{\mu+2b,\mu-2b,-\mu+2b,\mu-2b\}$
as $\hat{C}_n(a,b)$ is not LCD if $a/b \in \{\mu/b+2,\mu/b-2,-\mu/b+2,-\mu/b-2\}$ by Theorem \ref{theorem.LCD.char.odd.extension} when $r \ge 1$.
\end{proof}

\section{Concatenation}

In this section we construct LCD codes over $\F_q$ with prescribed large minimum distance using DT that we characterize in Theorems \ref{theorem.LCD.char.even.extension} and \ref{theorem.LCD.char.odd.extension} over an extension field $\F_{q^s}$ and a kind of concatenation. It is not difficult to observe that most of the concatenation maps do not work as they would not respect LCD property over the base and the extension fields. Hence we use an isometry map, which is introduced in \cite{CGOS} as a special concatenation respecting LCD property. The minimum distance of the DT codes in Theorems \ref{theorem.LCD.char.even.extension} and \ref{theorem.LCD.char.odd.extension} have minimum distance at most $3$. However, the minimum distance of the isometry code can be arbitrarily large, provided the length of the isometry code is increased if necessary.

First we recall some results and notations from \cite{CGOS}.

\begin{definition} \label{isometry}
Let $n \ge s \ge 2$ be integers. An $\F_q$-linear map $\pi: \F_{q^s} \ra \F_q^n$ is called an {\it isometry} if there exists a basis $(e_1, \ldots, e_s)$ of $\F_{q^s}$ over $\F_q$ such that
\be
\pi(e_i) \cdot \pi(e'_j)=\delta_{i,j}
\nn\ee
for all $1 \le i,j \le s$. Here $\cdot$ is the Euclidean inner product on $\F_q^n$, $(e'_1, \ldots, e'_s)$ is the dual basis of $(e_1, \ldots, e_s)$, and $\delta_{i,j}$ is the Kronecker delta.

The image $\pi(\F_{q^s})$ is an $[n,s]_q$ code, which we call an isometry code. Let $d_\mi(q;[n,s])$ be the largest nonnegative integer $d$ such that there exists an isometry $\pi:\F_{q^s} \ra \F_q^n$ and $\pi(\F_{q^s})$ has minimum distance $d$.
\end{definition}

Note that $d_\mi(q;[n,s])$ coincides with the largest maximum distance of $[n,s]_q$ codes for many parameters. For example
$d_\mi(2;[4,2])=2$, $d_\mi(2;[5,3])=2$ and $d_\mi(3;[5,2])=3$ (see \cite{CGOS}).

Let $s \ge 2$ be an integer. Let $n \ge s$ be an integer such that $d_\mi(q,[n,s]) \ge 1$. Let $\pi:\F_{q^s}
\ra \F_q^n$ be an isometry such that $\pi(\F_{q^s})$ is an $[n,s,d]_q$ code, where $d=d_\mi(q,[n,s])$.

For $a,b \in \F_{q^s}$ with $b \neq 0$ and an integer $N \ge 2$, let $\hat{C}_N(a,b)$ be the $[2N,N]_{q^s}$ code given in Definition \ref{defition.Cna2}.

Let $\pi^{\otimes 2N}: \F_{q^s}^{2N} \ra \F_q^{2Nn}$ be the $\F_q$-linear map given by
\be
\pi^{\otimes 2N}(c_1, \ldots, c_{2N})=\left(\pi(c_1), \ldots, \pi(c_{2N})\right).
\nn\ee
We use these notations in the following two theorems.

Now we are ready to construct LCD codes of arbitrary minimum distance using tridiagonal Toeplitz matrices over extension fields and isometry. First we present the even characteristic case.

\begin{theorem} \label{theorem.isometry.even}
Under notations as above assume that
$\Char \F_q$ is even.
Let $r$ be the nonnegative integer such that $2^r \mid\mid (N+1)$. Let $m$ be the nonnegative integer such that $N+1=2^r(m+1)$. Let $\theta$ be a primitive $(m+1)$-th root of $1$.  Moreover we assume the following in the corresponding cases:
\begin{itemize}
\item If $r=0$, then $a/b \not \in \{1/b + \theta^i + \theta^{-i}: 1 \le i \le m/2\}$. \\
\item If $r \ge 1$ and $m=0$, then $a \neq 1$. \\
\item If $r \ge 1$ and $m \ge 1$, then $a/b \not \in \{1/b\} \cup \{1/b + \theta^i + \theta^{-i}: 1 \le i \le m/2\}$.
\end{itemize}

Then $\pi^{\otimes 2N}\left( \hat{C}_N(a,b) \right)$ is an LCD code with  parameters $[2nN,sN,D^*]_q$ such that $D^* \ge dD$, where $D$ is the minimum distance of $\hat{C}_N(a,b)$.
\end{theorem}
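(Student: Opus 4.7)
The plan is to reduce the theorem to three ingredients: first, that $\hat{C}_N(a,b)$ is LCD over $\F_{q^s}$ under the stated hypotheses; second, routine bookkeeping of length, dimension, and minimum distance for the concatenation through an isometry; third, that an isometry preserves the LCD property when lifting from $\F_{q^s}$ to $\F_q$. For the first ingredient, one matches the three bulleted hypotheses on $a/b$ to the three regimes ($r=0$; $r\ge 1,\, m=0$; $r\ge 1,\, m\ge 1$) covered by Theorems \ref{theorem.LCD.char.even2} and \ref{theorem.LCD.char.even.extension}, applied over the extension field $\F_{q^s}$. Since $\Char \F_q=2$, one has $-1=1$, so the forbidden sets written with $-1/b$ in those earlier theorems coincide with the sets written here with $1/b$. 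Hence $\hat{C}_N(a,b)$ is an $[2N,N,D]_{q^s}$ LCD code.

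For the second ingredient, note that $\pi^{\otimes 2N}$ is $\F_q$-linear, and is injective because $\dim_{\F_q}\pi(\F_{q^s})=s$ forces $\pi$ itself to be injective. Therefore $\pi^{\otimes 2N}(\hat{C}_N(a,b))$ has length $2Nn$ and $\F_q$-dimension $sN$. For the distance, if $c=(c_1,\dots,c_{2N})\in \hat{C}_N(a,b)$ is nonzero with $\F_{q^s}$-Hamming weight $w\ge D$, then each nonzero component $c_k$ gives a nonzero $\pi(c_k)\in \pi(\F_{q^s})$ of weight at least $d$, so the $\F_q$-Hamming weight of $\pi^{\otimes 2N}(c)$ is at least $wd\ge Dd$.

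The heart of the proof is the third ingredient. Expanding in the basis $(e_i)$ and using bilinearity of the Euclidean inner product, the defining isometry relation $\pi(e_i)\cdot \pi(e_j')=\delta_{i,j}$ together with the fact that $(e_1',\dots,e_s')$ is the trace-dual basis yields the identity $\pi(x)\cdot \pi(y)=\Tr_{\F_{q^s}/\F_q}(xy)$ for all $x,y\in \F_{q^s}$; summing over the $2N$ coordinates gives
\be
\pi^{\otimes 2N}(c)\cdot \pi^{\otimes 2N}(c')=\Tr_{\F_{q^s}/\F_q}(c\cdot c').
\nn\ee
Suppose $v=\pi^{\otimes 2N}(c)$ lies in both $\pi^{\otimes 2N}(\hat{C}_N(a,b))$ and its $\F_q$-dual. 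Because $\hat{C}_N(a,b)$ is $\F_{q^s}$-linear, $\alpha c'\in \hat{C}_N(a,b)$ for every $c'\in \hat{C}_N(a,b)$ and $\alpha\in \F_{q^s}$, so the orthogonality condition forces $\Tr_{\F_{q^s}/\F_q}\bigl(\alpha(c\cdot c')\bigr)=0$ for all such $\alpha$ and $c'$. Non-degeneracy of the trace form then gives $c\cdot c'=0$ for every $c'\in \hat{C}_N(a,b)$; combined with $c\in \hat{C}_N(a,b)$ and the LCD property established in the first ingredient, this forces $c=0$, hence $v=0$.

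The main obstacle I anticipate is the third step: one must leverage the $\F_{q^s}$-linearity of the preimage code together with the non-degeneracy of the trace, because $\pi^{\otimes 2N}$ is only $\F_q$-linear and the image code is not a priori closed under $\F_{q^s}$-scaling. The first two ingredients are essentially bookkeeping, though I would want to verify carefully that the "dual basis" $(e_1',\dots,e_s')$ in Definition \ref{isometry} is to be interpreted with respect to the trace pairing, which is what makes the identity $\pi(x)\cdot\pi(y)=\Tr(xy)$ valid.
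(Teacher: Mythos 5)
Your proof is correct, and at the top level it follows the same decomposition as the paper: first establish that $\hat{C}_N(a,b)$ is LCD over $\F_{q^s}$ by matching the three bulleted hypotheses to Theorems \ref{theorem.LCD.char.even2} and \ref{theorem.LCD.char.even.extension} (your observation that $-1/b=1/b$ in characteristic two is exactly why the forbidden sets here are written with $1/b$), and then pass through the isometry. The difference is in the second step: the paper simply cites \cite[Theorem 3.1]{CGOS} as a black box for the fact that concatenating an LCD code over $\F_{q^s}$ with an isometry code yields an LCD code over $\F_q$ with the stated parameters, whereas you reprove that result from scratch. Your self-contained argument --- deriving $\pi(x)\cdot\pi(y)=\Tr_{\F_{q^s}/\F_q}(xy)$ from the defining relation $\pi(e_i)\cdot\pi(e_j')=\delta_{i,j}$ and the trace-dual interpretation of $(e_1',\ldots,e_s')$, then using $\F_{q^s}$-linearity of $\hat{C}_N(a,b)$ and non-degeneracy of the trace to show the hull of the image is trivial --- is exactly the content of the cited theorem, and you correctly identify the one place where care is needed (the image code is only $\F_q$-linear, so one must scale by $\alpha\in\F_{q^s}$ inside the preimage before applying non-degeneracy). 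What the paper's citation buys is brevity; what your version buys is a transparent explanation of why an arbitrary concatenation map would fail and why the isometry condition is precisely what makes the LCD property descend. The parameter bookkeeping (length $2nN$, dimension $sN$ via injectivity of $\pi$, and $D^*\ge dD$ coordinatewise) is also handled correctly.
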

\begin{proof}
Using Theorem \ref{theorem.LCD.char.even.extension} we obtain that $\hat{C}_N(a,b)$ is an LCD code over the extension field $\F_{q^s}$ with minimum distance $D$. As $\pi(\F_{q^s})$ is an $[n,s,d]_q$ isometry code, we complete the proof by \cite[Theorem 3.1]{CGOS}.
\end{proof}

Next we consider the odd characteristic case.

\begin{theorem} \label{theorem.isometry.odd}
Under notation as above assume that
$\Char \F_q$ is odd, which is $p$.
Let $r$ be the nonnegative integer such that $p^r \mid\mid (N+1)$. Let $m$ be the nonnegative integer such that $N+1=p^r(m+1)$. Let $\theta$ be a primitive $2(m+1)$-th root of $1$. Let $\mu^2 =-1$. Moreover, we assume the following in the corresponding cases:
\begin{itemize}
\item If $r=0$, then
\\$
a/b \not \in
\begin{array}{l}
\left\{-\mu/b + \theta^i + \theta^{-i}: 1 \le i \le m\right\}
\cup
\left\{\mu/b + \theta^i + \theta^{-i}: 1 \le i \le m\right\}.
\end{array}
$
\\

\item If $r \ge 1$ and $m=0$, then
\\$a/b \not \in
\left\{-\mu/b+2, -\mu/b-2, \mu/b+2, \mu/b-2 \right\}
$. \\

\item If $r \ge 1$ and $m \ge 1$, then
\\$a/b \not \in
\begin{array}{l}
\left\{-\mu/b+2, -\mu/b-2, \mu/b+2, \mu/b-2 \right\}
\cup
\left\{-\mu/b + \theta^i + \theta^{-i}: 1 \le i \le m\right\}
\\ \cup
\left\{\mu/b + \theta^i + \theta^{-i}: 1 \le i \le m\right\},
\end{array}
$.
\end{itemize}
Then $\pi^{\otimes 2N}\left( \hat{C}_N(a,b) \right)$ is an LCD code with  parameters $[2nN,sN,D^*]_q$ such that $D^* \ge dD$, where $D$ is the minimum distance of $\hat{C}_N(a,b)$.
\end{theorem}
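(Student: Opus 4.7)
The plan is to mirror the structure of the proof of Theorem \ref{theorem.isometry.even}, with Theorem \ref{theorem.LCD.char.odd.extension} (and Theorem \ref{theorem.LCD.char.odd2} in the base case $r=0$) playing the role that Theorem \ref{theorem.LCD.char.even.extension} plays there. The first step is to notice that the three bulleted hypotheses on $a/b$ are exactly engineered to be the complementary conditions appearing in the characterizations of LCD property for $\hat{C}_N(a,b)$ over the extension field $\F_{q^s}$: the case $r=0$ matches the hypothesis of Theorem \ref{theorem.LCD.char.odd2}, the case $r \ge 1, m=0$ matches the $m=0$ clause of Theorem \ref{theorem.LCD.char.odd.extension}, and the case $r \ge 1, m \ge 1$ matches the $m>0$ clause of the same theorem. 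Applying the appropriate theorem in each case yields that $\hat{C}_N(a,b)$ is an LCD $[2N,N,D]_{q^s}$ code, where $D$ is by definition its minimum distance.

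The second step is purely the concatenation transfer. Since $\pi: \F_{q^s} \to \F_q^n$ is an isometry in the sense of Definition \ref{isometry}, with $\pi(\F_{q^s})$ an $[n,s,d]_q$ isometry code, we apply \cite[Theorem 3.1]{CGOS} to the LCD code $\hat{C}_N(a,b)$. This gives that $\pi^{\otimes 2N}(\hat{C}_N(a,b))$ is an $\F_q$-linear LCD code. The length is $2Nn$ (each of the $2N$ coordinates is expanded to $n$ by $\pi$), the dimension is $sN$ (each of the $N$ extension-field dimensions contributes $s$ over $\F_q$), and the minimum distance satisfies $D^* \ge dD$ by the same cited theorem, which expresses the standard concatenation bound under isometry.

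The only mild subtlety, and the step where one must take care, is verifying that the matching between the hypotheses of the present theorem and those of Theorem \ref{theorem.LCD.char.odd.extension}/Theorem \ref{theorem.LCD.char.odd2} is exact in each of the three bullets; this is a purely formal check that the excluded sets for $a/b$ agree verbatim. Beyond that bookkeeping, no new idea is needed: all the hard work has already been done in the spectral analysis via Dickson polynomials (Theorems \ref{theorem1.BZ} and \ref{theorem2.BZ}) and in the isometry construction of \cite{CGOS}. Thus the proof is essentially a one-line invocation of the two ingredients, and the main obstacle is simply the case distinction on $r$ and on whether $m=0$.
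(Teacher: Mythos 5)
Your proposal is correct and matches the paper's own argument exactly: the paper likewise deduces that $\hat{C}_N(a,b)$ is LCD over $\F_{q^s}$ from Theorem \ref{theorem.LCD.char.odd.extension} (with Theorem \ref{theorem.LCD.char.odd2} covering $r=0$) and then invokes the isometry concatenation result of \cite[Theorem 3.1]{CGOS} to obtain the $[2nN,sN,D^*]_q$ LCD code with $D^* \ge dD$. Nothing further is needed.
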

\begin{proof}
The proof is similar to that of Theorem \ref{theorem.isometry.even}. The difference is that we use Theorem \ref{theorem.LCD.char.odd.extension} instead of Theorem \ref{theorem.LCD.char.even.extension}.
\end{proof}

In the following examples, we illustrate how to construct LCD codes with a prescribed
lower bound on the minimum distance over small fields, in particular $\F_2$ and $\F_3$, using
the methods of this paper. In fact, we obtain good codes having optimal and almost
optimal parameters and the actual minimum distances of our constructed codes are
even better than the prescribed lower bounds in these examples.

\begin{example} \label{example1}
Let $s=2$, $q=2$ and $N=4$. Let $w$ be a primitive element of $\F_{q^s}$ satisfying $w^2+w+1=0$. For $a=w$ and $b=1$, the $[2N,N]_{q^s}$ code  $\hat{C}_N(a,b)$ given in Definition \ref{defition.Cna2} is an LCD code having parameters $[4,2,3]_4$ (see Theorem \ref{theorem.LCD.char.even.extension}).
Put $n=4$.
For $a_1=w$, $a_2=w^2$, $a_3=1$ and $a_4=1$, the $\F_q$-linear map
\be
\begin{array}{rcl}
\pi: \F_{q^s} & \ra & \F_q^n \\
x & \mapsto & \left( \Tr(a_1 x), \Tr(a_2 x), \Tr(a_3 x), \Tr(a_4 x) \right)
\end{array}
\nn\ee
is an isometry map such that the corresponding isometry code $\pi(\F_{q^s})$ is an $[4,2,2]_2$ code.
Here $\Tr$ is the trace map from $\F_{q^s}$ onto $\F_q$. Using Theorem \ref{theorem.isometry.even} we obtain that
$\pi^{\otimes 2N}\left( \hat{C}_N(a,b) \right)$
is an LCD code with  parameters $[16,4,D^*]_2$ with the prescribed lower bound on the minimum distance $D^*$ given by $D^* \ge 6$. In fact using Magma \cite{Magma} it is easy to verify  that $D^* = 7$. This is an optimal LCD code, namely the largest minimum distance $D$ of LCD code with parameters $[16,4,D]_2$ is $7$ (see \cite{AH}).
\end{example}

\begin{example} \label{example2}
Let $s=3$, $q=2$ and $N=4$. Let $w$ be a primitive element of $\F_{q^s}$ satisfying $w^3+w+1=0$. For $a=w$ and $b=w^6$, the $[2N,N]_{q^s}$ code  $\hat{C}_N(a,b)$ given in Definition \ref{defition.Cna2} is an LCD code having parameters $[4,2,3]_4$ (see Theorem \ref{theorem.LCD.char.even.extension}).
Put $n=5$.
For $a_1=w^3$, $a_2=w^5$, $a_3=w^6$, $a_4=1$ and $a_5=1$, the $\F_q$-linear map
\be
\begin{array}{rcl}
\pi: \F_{q^s} & \ra & \F_q^n \\
x & \mapsto & \left( \Tr(a_1 x), \Tr(a_2 x), \Tr(a_3 x), \Tr(a_4 x), \Tr(a_5 x) \right)
\end{array}
\nn\ee
is an isometry map such that the corresponding isometry code $\pi(\F_{q^s})$ is an $[5,3,2]_2$ code.
Here $\Tr$ is the trace map from $\F_{q^s}$ onto $\F_q$. Using Theorem \ref{theorem.isometry.even} we obtain that
$\pi^{\otimes 2N}\left( \hat{C}_N(a,b) \right)$
is an LCD code with  parameters $[20,6,D^*]_2$ with the prescribed lower bound on the minimum distance $D^*$ given by $D^* \ge 6$. In fact using Magma \cite{Magma} it is easy to verify  that in fact $D^* = 7$. This is an almost optimal LCD code, namely  the largest minimum distance $D$ of and LCD code  with parameters $[20,6,D]_2$ is $8$ (see \cite{AH}).
\end{example}

\begin{example} \label{example3}
Let $s=2$, $q=3$ and $N=4$. Let $w$ be a primitive element of $\F_{q^s}$ satisfying $w^2+2w+2=0$. For $a=2$ and $b=w$, the $[2N,N]_{q^s}$ code  $\hat{C}_N(a,b)$ given in Definition \ref{defition.Cna2} is an LCD code having parameters $[4,2,3]_9$ (see Theorem \ref{theorem.LCD.char.odd.extension}).
Put $n=5$.
For $a_1=w$, $a_2=w$, $a_3=w^3$, $a_4=w^3$ and $a_5=2$, the $\F_q$-linear map
\be
\begin{array}{rcl}
\pi: \F_{q^s} & \ra & \F_q^n \\
x & \mapsto & \left( \Tr(a_1 x), \Tr(a_2 x), \Tr(a_3 x), \Tr(a_4 x), \Tr(a_5 x) \right)
\end{array}
\nn\ee
is an isometry map such that the corresponding isometry code $\pi(\F_{q^s})$ is an $[5,2,3]_3$ code.
Here $\Tr$ is the trace map from $\F_{q^s}$ onto $\F_q$. Using Theorem \ref{theorem.isometry.even} we obtain that
$\pi^{\otimes 2N}\left( \hat{C}_N(a,b) \right)$
is an LCD code with  parameters $[20,4,D^*]_3$ with the prescribed lower bound on the minimum distance $D^*$ given by $D^* \ge 9$.
Using Magma \cite{Magma} it is easy to verify  that in fact $D^* = 10$.
The largest minimum distance $D$ of an LCD code  with parameters $[20,4,D]_3$ is $12$
by \cite[Table 7]{AH+}.
\end{example}

\section{Conclusion}
In this paper we have constructed LCD double Toeplitz codes from tridiagonal symmetric Toeplitz matrices. It would be worthwhile to extend these results to symmetric Toeplitz matrices with
more than three nontrivial diagonals. We conjecture that this might require multivariate Dickson polynomials \cite{DicksonBook}. This might help to construct DT codes over small fields without
recourse to the concatenation process of the previous section.

\section*{Acknowledgement}
This research is supported by the National Natural Science Foundation of China
  (Grants no. 12071001 and 61672036), the Excellent Youth Foundation
  of Natural Science Foundation of Anhui Province (1808085J20), the
  Academic Fund for Outstanding Talents in Universities (gxbjZD03).



\end{document}